\newtheorem{Definition}{Definition}%[section]
\newtheorem{Theorem}[Definition]{Theorem}
\newtheorem{Lemma}[Definition]{Lemma}
\newtheorem{Remark}[Definition]{Remark}
\newcommand{\twovector}[2]{\ensuremath{\left(\begin{smallmatrix}#1\\#2\end{smallmatrix}\right)}}
\newcommand{\twomatrix}[4]{\ensuremath{\left(\begin{smallmatrix} #1 & #2\\ #3 & #4\end{smallmatrix}\right)}}
\newcommand{\threevec}[1]{\textbf{#1}}
\newcommand{\ideal}[1]{\langle #1 \rangle }
\newcommand {\CC}   {{\mathcal C}_{P\cap Q}}
\newcommand {\C}   {\mathbb C}
\newcommand {\D}   {\mathbb D}
\newcommand {\R}   {\mathbb R}
\newcommand {\Q}   {\mathbb Q}
\newcommand {\V}   {\mathbb V}
\begin{document}

\RRNo{8653}

\makeRR
 
% \linenumbers
\section{Introduction}
\label{section-Introduction}
Given a bivariate polynomial $f$ with rational coefficients, a classical problem
is the computation of the topology of the real plane curve
$\mathcal{C}=\{(x,y)\in \R^2 | f(x,y)=0\}$. One may ask for the topology in the
whole plane or restricted to some bounding box. In both cases, the topology is
output as an embedded piecewise-linear graph that has the same topology as the
curve $\mathcal{C}$.  For a smooth curve, the graph is hence a collection of
topological circles or lines; for a singular curve, the graph must report all
the singularities: isolated points and self-intersections.

Symbolic methods based on the cylindrical algebraic decomposition can guarantee
the topology of any curve. % (\cite{mpsttw-ecg-2006} and references therein).
However, the high complexity of these purely algebraic methods prevents them to
be applied in practice on difficult instances. On the other hand, purely
numerical methods such as curve tracking with interval arithmetic or subdivision
are efficient in practice for smooth curves but typically fail to certify the
topology of singular curves. A long-standing challenge is to extend numerical
methods to compute efficiently the topology of singular curves.

Computing the topology of a singular curve can be done in three steps.
\begin{enumerate}
\def\labelenumi{\arabic{enumi}.}
\itemsep1pt\parskip0pt\parsep0pt
\item Enclose the singularities in isolating boxes.
\item Compute the local topology in each box, that is $i)$
compute the number of real branches connected to the singularity, $ii)$ ensure
that it contains no other branches.
\item Compute the graph connecting the boxes.
\end{enumerate}

The third step can be done using existing certified numerical algorithms (e.g.
\cite{Goldsztejn2010,Joris2011,beltran2013robust}), we will thus focus on the
first two steps.

% For the first step, numerical algorithms cannot certify singularities of a curve
% defined by $f$ in the general case.

\paragraph{Contribution and overview.}

  The specificity of the resultant or the discriminant curves computed from
  generic surfaces is that their singularities are stable, this is a classical
  result of singularity theory due to Whitney. The key idea of our work is to
  show that, in this specific case, the over-determined system defining the curve
  singularities can be transformed into a regular well-constrained system of a
  transverse intersection of two curves defined by subresultants. This new
  formulation can be seen as a specific deflation system that does not contain
  spurious solutions.

  %, and as such, encodes spurious
  %solutions.
  %
  %We show that with our deflation system, it is possible to remove these spurious solutions using interval evaluation techniques. In particular our method does not require to evaluate interval up to a separation bound, nor does it require to reduce the problem to a univariate polynomial.
  %
  %To overcome this problem, we show how to discard these spurious
  %solutions by checking inequalities, which is %easily
  %delt with numerically via interval evaluation.

Our contribution focuses on the first two steps of the above mentioned topology
algorithm for a curve defined by the resultant of two trivariate polynomials $P$
and $Q$: $f=Resultant_z(P,Q)$. 

In Section~\ref{sec:singularities}, the main results are
Theorems~\ref{th:sing-sres} and \ref{th:nodecusp} that  characterize the
singularities of the resultant or discriminant curve in terms of subresultants
under generic assumptions. A semi-algorithm~\ref{algo:assuption-check} is
proposed to check these generic assumptions, i.e. it terminates iff the
assumptions are satisfied (note that this is the best one can hope for a purely
numerical method). Based on the characterization of Theorems~\ref{th:sing-sres}
and \ref{th:nodecusp} , Algorithm~\ref{algo:subdiv-sing}, using subdivision and
interval evaluation, isolates the node and cusp singularities with an adaptive
certification.

Sections~\ref{sec:branches}~and~\ref{sec:loopdetection} address the second step
on the above mentioned topology algorithm, that is computing the local topology
at singularities.
%The general idea is to adaptively refine the isolating boxes
%of the singular points until some inequalities are satisfied. 
Algorithms~\ref{algo:nb-branches} and \ref{algo:nb-branches-discrim} in
Section~\ref{sec:branches} distinguish nodes from cusps and compute the number
of branches.  Then in Section~\ref{sec:loopdetection},
Algorithm~\ref{algo:avoid-loop} certifies that an isolation box of a singular
point does not contain locally other branches than those that pass through the
singularity. 

In Section~\ref{sec:experiments}, experiments are detailed showing that our
specialized certified numerical method outperforms state-of-the-art
implemented methods for polynomials of degree greater or equal to $5$. Moreover, the performance of our method is also improved when we restrict the problem to a box.
%both symbolic and homotopy methods.

% Our contribution focuses on the first two steps of the above mentioned topology
% algorithm for a curve defined by the resultant of two trivariate polynomials $P$
% and $Q$: $f=Resultant_z(P,Q)$.  We show that it is possible to certify its
% singularities and compute its local topology with adaptive numerical algorithms.
% % However, when $f$ is the resultant of two trivariate polynomials $P$ and $Q$, we
% % show that it is possible to certify its singularities and compute its local
% % topology with adaptive numerical algorithms.
% %Let $f=Resultant_z(P,Q)$. 
% More precisely, Section \ref{sec:singularities} presents a numerical algorithm
% to isolate and certify nodes and ordinary cusp points of $f$. Section
% \ref{sec:resultant} describes how to compute the local topology of nodes of
% $f$. Finally, when $Q=\frac{\partial P}{\partial z}$, Section
% \ref{sec:discriminant} describes how to compute the local topology of nodes and
% ordinary cusps of $f$.

\paragraph{Notations.}

Let $f$ be a bivariate polynomial and $\mathcal{C}$ it associated curve. We
denote by $f_{x^iy^j}$ the partial derivative $\frac{\partial^{i+j} f}{(\partial
x)^i(\partial y)^j}$.  A point $p=(\alpha,\beta)$ in $\C^2$ is \emph{singular}
for $f$ if $f( { p}) = f_x( { p}) = f_y( { p}) = 0$, and \emph{regular}
otherwise. A \emph{node} is a singular point with
$\det(\text{Hessian}(f))=f_{xy}^2-f_{x^2}f_{y^2}\neq0$. An \emph{ordinary cusp}
is a singular point such that $\det(\text{Hessian}(f))=0$ and for all non
trivial direction $(u,v)$, $f(\alpha +ut,\beta+vt)$ vanishes at $t=0$ with
multiplicity at most $3$.

We denote by $\Box f$ any convergent interval extension of $f$, that is for any
box $B$, $\{f(x,y) | (x,y)\in B\} \subset \Box f(B)$, and for any decreasing
sequence of boxes $B_i$ converging to a point $p$, the sequence $\Box f(B_i)$
converges to $f(p)$.
% Given an interval box $B\subset\R^2$, we denote by $\Box f(B)$ the interval of
% $\R$ defined by replacing the arithmetic operations $+,-,\times$ by the
% corresponding interval operations \MP{il semble que l'on peut dire qu'il faut
%   slt que ce soit convergent, inclusion monotone aussi?} (note that due to the
% non-distributivity of interval operations, different evaluation schemes of the
% same function $f$ give different interval functions). 
By abuse of notation, we often simply denote $\Box f(B)$ by $\Box f$. The
Krawczyk operator of a mapping $F$ defined in Lemma~\ref{cri:regsol} is denoted
by $K_F$.

For two polynomials $P$ and $Q$ in $\D[z]$ with $\D$ a unique factorization
domain (in this article $\D$ will be $\Q[x,y]$), recall that the $i^{\text{th}}$
subresultant polynomial is of degree at most $i$ (see e.g. \cite[\S
3]{Kahoui03}), we denote it $S_i(z)=s_{ii}z^i + s_{i,i-1}z^{i-1} + \dots +
s_{i0}$. The resultant is thus $S_0(z)=s_{00}$ in $\D$ and we also denote it
more classically as $Res_z(P,Q)$. Finally, $\V(f_1,\ldots,f_n)$ denotes the
solutions of the system $f_1=\cdots=f_n=0$.

% The variety of an ideal $I$ is noted $\V(I)$ and its Zariski closure
% $\overline{\V(I)}$. 
% \MP{check}

\paragraph{Previous and related work.}
%%%%%%%%%%%%%%%%%%%%%
\label{sec:previous}

There are many works addressing the topology computation via symbolic methods,
see for instance the book chapter \cite{mpsttw-ecg-2006} and references
therein. Most of them use subresultant theory, but there are also some
alternatives using only resultants
(e.g. \cite{WolSei:topology:05,sagraloff11-bisolve-complexity}) or Gröbner bases
and rational univariate representations \cite{isotop-mcs-10}. Some alternative even compute a rational univariate representation numerically if all approximate solutions are known \cite{AkogluHS14}.
For the restricted case of computing the topology of non-singular curves, numerical methods are usually
faster and can in addition reduce the computation to a user defined bounding
box. One can mention 
interval analysis methods \cite{Goldsztejn2010} or more generally certified homotopy methods \cite{beltran2013robust,Joris2011}. %, both being based on variants of the Newton operator \cite{Rump1983}.
These methods are based on the fact that the regular solutions of a square
system can be certified and approximated with quadratic convergence with the
interval Newton-Krawczyk operator \cite{Rump1983,Nbook90}.
Another well-studied numerical approach is via recursive subdivision of the
plane. Indeed, the initial idea of the marching cube algorithm
\cite{Lorensen:1987:MCH:37402.37422} can be further improved with interval
arithmetic to certify the topology of smooth curves
\cite{Snyder1992,plantinga04,liang08}. %,burr-csaII-08}.alberti-topo-cagd-08?

For singular curves, isolating the singular points is already a challenge from a
numerical point of view. Indeed, singular points are defined by an
over-determined system $f=f_x=f_y=0$ and are not necessarily regular solutions
of this system. A classical approach to handle an over-determined system
$\{f_1, \ldots, f_m\}$ is to combine its equations in the form
${f_1}_{x_i}f_1+\cdots+{f_m}_{x_i}f_m=0$ for each variable
$\{x_i\}_{ 1\leq i \leq n < m}$,
%or to consider several random subsystems, 
to transform it into a square system
\cite{dedieu2006points},
% \MP{dedieu ne fait pas ca mas utilise les moindre carre
%   et on peut alors dire: use a least square formulation and hence transform the
%   problem into a global optimization; sinon trouver une ref qui fait des
%   combinaisons}\GM{en fait je considère que les moindres carrés sont une
%   combinaison avec en facteur les dérivés des polynomes mais c'est
%   effectivement pas directe, j'ai modifé en explicitant les équations}
but this introduces spurious solutions. Singular solutions can be handled
through deflation \cite{Giusti2007,Ojika1983,Leykin2006,Mantzaflaris2011Issac},
roughly speaking, the idea is to compute partially the local structure of a
non-regular solution, and use this information to create a new system where this
solution is regular.  However this system is usually still overdetermined, and
it does not vanish on the solutions of the original system that do not have the
same local structure.  Thus, this cannot be directly used to separate solutions
with different multiplicity structures. It is important to mention that
  the certification of solutions of over-determined systems is theoretically out
  of reach of numerical methods in the general case. In the polynomial case,
  non-adaptive lower bounds can be used but they are too pessimistic to be
  practical, see \cite[Remark 7]{Hauenstein2012} or \cite{Burr2012}.
%A recent
%  symbolic-numeric alternative approach is proposed in \cite{AkogluHS14} via
%  computations of exact rational univariate representations from approximate
%  solutions.
  
When the curve we consider is a resultant, its singular locus can be
related to the first subresultant
(see \cite[\S 4.3]{1979Jouanolou} and \cite[\S 5]{buse2009} for examples). 
In Section \ref{sec:singularities}, we use this structure to exhibit a square
deflation system. Another approach would be to exhibit a square system in higher
dimension that defines the set of points for which the polynomials $P$ and $Q$
have two solutions. This approach was considered in \cite{2013-delanoue-jcam} to
compute the topology of the apparent contour of a smooth mapping from
$\mathbb{R}^2$ to $\mathbb{R}^2$.

The number of real branches connected to the singularity can be computed with
the topological degree of a suitable mapping
\cite{Szafraniec88,alberti-topo-cagd-08,Mantzaflaris2011Issac} or with the fiber
multiplicity together with isolation on the box boundary
\cite{WolSei:topology:05}.
Certifying the topology inside a box requires the detection of loops near a
singularity. It is usually solved in the literature by isolating the
$x$-extreme points, which reduces the problem to a univariate polynomial
computed with resultants (\cite{WolSei:topology:05,mpsttw-ecg-2006} for
example).

We are not aware of numerical algorithms that can handle in practice the
computation of the topology of singular curves.
Still, relying on global non-adaptative separation bounds for algebraic
systems, the subdivision approach presented in \cite{Burr2012} can theoretically
certify the topology of any singular curve. Due to these worst-case bounds, this
algorithm cannot be practical. 
A numerical algebraic geometric approach is presented in \cite{lu2007} using
irreducible decomposition, generic projection and plane sweep, deflation and
homotopy to compute the topology of a singular curve in any codimension. So far
this work seems more theoretical than practical and the certification of all the
algorithm steps appears as a challenge.

\section{Subresultant based deflation}
%%%%%%%%%%%%%%%%%%%%%%%%%%%%%%%%%%%%
\label{sec:singularities}
The input of algorithms in this section are two trivariate polynomials $P, Q$ 
%in $\Q[x,y,z]$ 
and a box $B_0$ in $\R^2$.  Our goal is to isolate the singularities of the
plane curve $f=0$ defined by the resultant of $P$ and $Q$ with respect to
$z$. 
% The main result is Theorem~\ref{th:sing-sres} exhibiting a square
% polynomial system $g=h=0$ and two polynomials $u$ and $v$ such that the
% singularities of $f$ are exactly the solutions of the system $g=h=u=0$ and of
% the system $g=h=0$ and $v\neq 0$.  
In this section, we exhibit a square polynomial system $g=h=0$ and a polynomial
$u$ such that the singularities of $f$ are exactly the solutions of the
constrained system $g=h=0$ and $u\neq 0$. Moreover, the singularities are
regular solutions of $g=h=0$, such that numerical methods can certify whether a
box contains or not a singularity. In Section~\ref{sec:sing-sres}, the
constrained system is constructed using subresultants.  In
Section~\ref{sec:regularity}, the regularity of this system is translated in
terms of types of singularities.  Generic assumptions are required so that these
characterizations of the singularities of $f$ hold.
Section~\ref{sec:checkgeneric} presents a semi-algorithm for checking the
assumptions that we now define.
Given two trivariate polynomials $P, Q$ in $\Q[x,y,z]$ and a two-dimensional box
$B_0$, we define the generic assumptions:
\begin{itemize}%\cramped
\item[$(A_1)$] Above the box $B_0$ for the $x$ and $y$-coordinates, 
  % In the box $B_0\times \R$, \MP{or $B_0\times \C$ ??} 
  the intersection of the surfaces $P(x,y,z)=0$ and $Q(x,y,z)=0$ is a smooth
  space curve denoted $\CC$, i.e. the tangent vector $\threevec{t}=
  \triangledown P \times \triangledown Q$ is nowhere null on $\CC$ (where
  $\triangledown P$ is the gradient vector $(P_x,P_y,P_z)$).
\item[$(A_2)$] Above any point $(\alpha,\beta)$ in $B_0$, there are at most two
  points of $\CC$ counted with multiplicities, or in other words, the polynomial
  $\gcd(P(\alpha,\beta,z),Q(\alpha,\beta,z))$ has degree at most two. In
  addition, there are finitely many $(\alpha,\beta)$ in $B_0$ such that this
  degree is two.
\item[$(A_3)$] The leading coefficients $L_P(x,y)$ and $L_Q(x,y)$ of $P$ and $Q$
  seen as polynomials in $z$ have no common solutions in $B_0$.
\item[$(A_4)$] The singularities of the resultant or discriminant curve are only nodes or ordinary cusps.
%regular solutions of the system $\{s_{11}, s_{10}\}$.
  % avoid points or curve at infinity
  % \item only nodes: when 2 points above, the tangents are not in the same
  %   vertical plane.  only used to identify cusps with detH=0
\end{itemize}

\noindent
Note that these assumptions are satisfied for almost all pairs of polynomials in $\Q[x,y,z]$.

% Note that the generic condition $(A_3)$ enables to specialize the subresultant
% sequence \cite[Lemma 3.1]{Kahoui03}.
% , that is the evaluation of the subresultant $S_i(x,y,z)$ at
% $(x,y)=(\alpha,\beta)$ equals the subresultant computed from the polynomial
% $P(\alpha,\beta,z)$ and $Q(\alpha,\beta,z)$ up to a scalar factor.

\subsection{Singularities via subresultants}
%%%%%%%%%%%%%%%%%%%%%%%%%%%%%%%%%%%%
\label{sec:sing-sres}

Let $f$ be the resultant polynomial (with respect to the variable $z$) of two
polynomials $P$ and $Q$ in $\Q[x,y,z]$. We always assume that $f$ is square-free
and thus its singularities are isolated.  
Let $S_{sing}=\V(f,f_xf_y)$ be the set of singular points of $f$ and
$S_{sres}=\V(s_{11},s_{10})-\V(s_{22})$. We prove in this section that, under
our assumptions, these two sets coincide.

\begin{Theorem}[\cite{recknagel13}]
  \label{th:sing-sres}
  Let $f$ be the resultant of the polynomials $P$ and $Q$ in $\Q[x,y,z]$ with
  respect to the variable $z$. Then $S_{sres} \subset S_{sing}$ and if the
  assumptions $(A_1)$ to $(A_3)$ are satisfied then $S_{sing} \subset S_{sres}$.
  % \MP{or if the assumptions $(A_1)$ to $(A_3)$ are satisfied on box $B_0$ then
  %   $S_{sing}\cap B_0 \subset S_{sres}\cap B_0$.}
 % \begin{itemize}
 %  \item $S_{sres} \subset S_{sing}$,
 %  \item if the generic conditions $(A_1)$ to $(A_3)$ are satisfied then
 %    $S_{sing} \subset S_{sres}$.
 %  \end{itemize}
\end{Theorem}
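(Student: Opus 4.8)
The plan is to prove the two inclusions separately, both through the classical dictionary between the subresultant chain of $P,Q$ (viewed in $\Q[x,y][z]$) and the gcd structure of their specializations: for $(\alpha,\beta)$ in $\C^2$, the polynomials $P(\alpha,\beta,z)$ and $Q(\alpha,\beta,z)$ have a gcd of degree $d$ exactly when $S_0(\alpha,\beta,z)=\dots=S_{d-1}(\alpha,\beta,z)\equiv 0$ while $S_d(\alpha,\beta,z)\not\equiv 0$, and then $S_d(\alpha,\beta,z)$ is, up to a nonzero scalar, that gcd; in particular $s_{dd}(\alpha,\beta)\neq 0$. I would first recall this from subresultant theory (see e.g.\ \cite{Kahoui03}); it is valid provided the subresultant chain specializes correctly at $(\alpha,\beta)$, which is guaranteed on $B_0$ by $(A_3)$: since $L_P,L_Q$ have no common zero there, at each point at least one of $P,Q$ retains its $z$-degree and the resultant/subresultant specialization formulas hold up to a nonzero power of a leading coefficient.

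For $S_{sres}\subset S_{sing}$, which should need no genericity assumption, the key is the identity coming from the structure of the subresultant chain that expresses the resultant $f=S_0$ through the first two nontrivial subresultants: for some $k\ge 0$ one has, as a polynomial identity in $\Q[x,y]$,
\[
s_{22}^{\,k}\, f \;=\; Res_z(S_2,S_1)\;=\;s_{22}s_{10}^2-s_{21}s_{10}s_{11}+s_{20}s_{11}^2 ,
\]
the last equality being a direct $3\times 3$ Sylvester determinant computation, and the right-hand side being homogeneous of degree $2$ in $(s_{11},s_{10})$. Hence on $\V(s_{11},s_{10})$ this polynomial together with its first partial derivatives in $x$ and $y$ vanishes, and away from $\V(s_{22})$ the factor $s_{22}^{\,k}$ is a local unit, so $f$ itself vanishes to order at least two at every point of $\V(s_{11},s_{10})\setminus\V(s_{22})$; in particular $f=f_x=f_y=0$ there.

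For the converse, fix a singular point $p=(\alpha,\beta)$ of $f$ in $B_0$ and set $d=\deg\gcd(P(p,z),Q(p,z))$. Since $f(p)=0$, the specialization formula for the resultant under $(A_3)$ gives $d\ge 1$, and $(A_2)$ gives $d\le 2$. The core step is to rule out $d=1$: then there is a unique common root $z_0$, simple for at least one of $P(p,\cdot),Q(p,\cdot)$, say $Q$, so $Q_z(p,z_0)\neq 0$ and near $(p,z_0)$ the surface $Q=0$ is a graph $z=z_Q(x,y)$; there $\CC$ is the plane curve $g:=P(x,y,z_Q(x,y))=0$, and $f$ equals, near $p$, a local unit times $g$ (the remaining roots of $Q(p,\cdot)$ are not roots of $P(p,\cdot)$, and leading coefficients are controlled by $(A_3)$). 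A short computation gives that $\nabla g(p)$ is, up to the nonzero factor $1/Q_z(p,z_0)$ and a sign, the pair of the first two components of the tangent vector $\threevec{t}=\triangledown P\times\triangledown Q$ at $(p,z_0)$; by $(A_1)$ this vector is nonzero there, and if its first two components vanished then $\triangledown P\parallel\triangledown Q$ at $(p,z_0)$, whence $\nabla g(p)=0$, i.e.\ $\CC$ singular at $p$, contradicting $(A_1)$. So $\nabla g(p)\neq 0$, $f$ is regular at $p$, contradiction. Therefore $d=2$, which forces $s_{11}(p)=s_{10}(p)=0$ and, $S_2(p,\cdot)$ being the degree-$2$ gcd, $s_{22}(p)\neq 0$; thus $p\in\V(s_{11},s_{10})\setminus\V(s_{22})=S_{sres}$.

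I expect the main obstacle to be the careful handling of the subresultant specialization dictionary together with the leading-coefficient bookkeeping: one must check that $(A_3)$ indeed makes $\deg\gcd$ of the specialized polynomials readable from the vanishing of the $s_{ij}$, and treat the case split according to which of $L_P(p),L_Q(p)$ (equivalently which of $P_z(p,z_0),Q_z(p,z_0)$) is nonzero. The remaining ingredients --- the $3\times 3$ determinant, the local factorization of $f$, and the tangent-vector computation --- are routine.
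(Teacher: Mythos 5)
Your proof of $S_{sres}\subset S_{sing}$ is essentially the paper's: both rest on the identity $s_{22}^2 f=Res(S_2,S_1)=s_{22}s_{10}^2-s_{21}s_{10}s_{11}+s_{20}s_{11}^2$, a quadratic form in $(s_{11},s_{10})$; you phrase the conclusion as an order-of-vanishing statement at points where $s_{22}\neq 0$, the paper as the ideal inclusion $\ideal{s_{22}^3f,s_{22}^3f_x,s_{22}^3f_y}\subset\ideal{s_{11},s_{10}}$ — same content. For $S_{sing}\subset S_{sres}$ you genuinely diverge. The paper treats the simple-root case with the algebraic identity of Bus\'e--D'Andrea, $f_y=\pm(P_yQ_z-P_zQ_y)s_{11}+uP+vQ$, which evaluated at the singular point gives $s_{11}(p)=0$ directly from $f_y(p)=0$, and only then invokes the gap structure theorem to force $\deg\gcd=2$. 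You instead rule out $\deg\gcd=1$ geometrically (implicit function theorem on $Q=0$, Poisson factorization of the resultant to write $f$ locally as a unit times $g=P(x,y,z_Q(x,y))$, smoothness of $\CC$ to get $\nabla g(p)\neq 0$), and both proofs finish identically via the gap structure theorem in the degree-two case. Your route avoids the Bus\'e formula at the price of the local analytic factorization of the resultant; the leading-coefficient bookkeeping you flag is a real obligation, since when $L_Q(p)=0$ the Poisson product over the roots of $Q$ does not literally apply and one must pass through the other polynomial or the specialization formula.

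One step needs tightening: the implication ``the first two components of $\threevec{t}$ vanish $\Rightarrow\nabla P\parallel\nabla Q$'' is false in general (take $\nabla P=(1,0,0)$, $\nabla Q=(0,1,0)$). It holds in your situation only because $Q_z(p,z_0)\neq 0$: writing $\threevec{t}=(t_1,t_2,t_3)$, the relations $t_1=t_2=0$ and $Q_z\neq 0$ give $P_x=P_zQ_x/Q_z$ and $P_y=P_zQ_y/Q_z$, hence $t_3=P_xQ_y-P_yQ_x=0$ and $\threevec{t}=0$, contradicting $(A_1)$. This half-line is exactly where the simplicity of the root in the gcd is consumed (at a genuine vertical tangent of $\CC$ one has $t_1=t_2=0$ but $t_3\neq 0$, and the projection is singular even though $\CC$ is smooth — which is precisely the $\deg\gcd=2$ case you must not accidentally exclude). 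With that inserted, your argument is complete.
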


\paragraph{Proof of the inclusion $S_{sres} \subset S_{sing}$.}
%%%%%%%%%%%%
Let $I=\ideal{f,f_x,f_y}$ and
$J=\ideal{s_{11},s_{10}}:\ideal{s_{22}}^\infty$, then $S_{sing}=\V(I)$ and
$\V(J)=\overline{\V(s_{11},s_{10})-\V(s_{22})}=\overline{S_{sres}}\supset
S_{sres}$. It is thus sufficient to prove that $I\subset J$, or in other words
that there exists a positive integer $m$ such that $\ideal{f,f_x,f_y}\cdot
\ideal{s_{22}}^m = \ideal{s_{22}^mf,s_{22}^mf_x,s_{22}^mf_y}\subset
\ideal{s_{11},s_{10}}$.

The generic chain rule of subresultant (see for instance \cite[Theorem
4.1]{Kahoui03}) yields $s_{22}^2 f = Res(S_2,S_1)$. On the other hand, 
$Res(S_2,S_1)=
\left| \begin{matrix}
  s_{22} & s_{11} & \\
s_{21} & s_{10} & s_{11}\\
s_{20} & & s_{10}
\end{matrix}
\right|
= s_{10}^2s_{22}+s_{11}^2s_{20}-s_{10}s_{11}s_{21}.
$ 
Hence $s_{22}^2 f \in \ideal{s_{11},s_{10}}$.

The previous identity expresses $s_{22}^2f$ as a quadratic form in $s_{11}$
and $s_{10}$, differentiating with respect to $x$ (or $y$) yields a sum with
$s_{11}$ or $s_{10}$ as a factor in each term, thus $\partial (s_{22}^2f)$ is
in $\ideal{s_{11},s_{10}}$. This implies that $\partial (s_{22}^3f)$ is also
in $\ideal{s_{11},s_{10}}$. In addition, $\partial (s_{22}^3f)=
3s_{22}^2f\partial s_{22} + s_{22}^3\partial f$ hence $s_{22}^3\partial f
= \partial (s_{22}^3f)- 3s_{22}^2f\partial s_{22}$ with both terms in
$\ideal{s_{11},s_{10}}$, thus $\partial (s_{22}^3f)$ is in
$\ideal{s_{11},s_{10}}$.  We conclude that
$\ideal{s_{22}^3f,s_{22}^3f_x,s_{22}^3f_y}\subset \ideal{s_{11},s_{10}}$,
hence $I\subset J$ and $S_{sres} \subset S_{sing}$.

\paragraph{Proof of the inclusion $S_{sing} \subset S_{sres}$.}
%%%%%%%%%%%%
% Generic condition $(A_3)$ enables to specialize the subresultant sequence
% \cite[Lemma 3.1]{Kahoui03}), that is the evaluation of the subresultant
% $S_i(x,y,z)$ at $(x,y)=(\alpha,\beta)$ equals the subresultant computed from the
% polynomial $P(\alpha,\beta,z)$ and $Q(\alpha,\beta,z)$ up to a non-null scalar
% factor in $\Q(\alpha,\beta)$ \MP{use footnote for the weird cse}. Thus a
% subresultant coefficient $s_{i,j}$ vanishes when evaluated at $(\alpha,\beta)$
% iff the corresponding coefficient of the subresultant sequence of
% $P(\alpha,\beta,z)$ and $Q(\alpha,\beta,z)$ is null.
% %

Let $(\alpha,\beta)$ be a singular point of $f$, so that $f(\alpha,\beta)=0$.
According to the generic condition $(A_2)$,
$\gcd(P(\alpha,\beta,z),Q(\alpha,\beta,z))$ has at most two simple roots or one
double root.

For the case of a double root, $\gcd(P(\alpha,\beta,z),Q(\alpha,\beta,z))$ has
degree 2 and by the gap structure theorem (more precisely its corollary showing
the link between the gcd and the last non-vanishing subresultant, see
e.g. \cite[Corollary 5.1]{Kahoui03}) and assumption $(A_3)$: (a) this gcd is the
subresultant $S_2(\alpha,\beta)$, hence $s_{22}(\alpha,\beta)\neq0$, and (b)
the subresultants of lower indices are vanishing, in particular
$s_{11}(\alpha,\beta)=0$ and $s_{10}(\alpha,\beta)=0$. Hence $(\alpha,\beta)$
is in $S_{sres}$.

Otherwise, let $\gamma$ be a simple root of
$\gcd(P(\alpha,\beta,z),Q(\alpha,\beta,z))$, the generic condition $(A_1)$
yields that the tangent vector $\threevec{t}(p)$ to $\CC$ at the point
$p=(\alpha,\beta,\gamma)$ is well defined and not vertical. Indeed, the
multiplicity of $\gamma$ in $\gcd(P(\alpha,\beta,z),Q(\alpha,\beta,z))$ is 1, so
it is also one in at least one of the polynomials $P(\alpha,\beta,z)$ or
$Q(\alpha,\beta,z)$. In other words, $P_z(p)\neq0$ or $Q_z(p)\neq0$ which implies that the $x$ and $y$-coordinates of
$\threevec{t}(p)$ cannot both vanish (otherwise, $\threevec{t}(p)$
would be the null vector contradicting assumption $(A_1)$).  Without loss of
generality we may assume that the $x$-coordinate of $\threevec{t}(p)$ is
not null: $x_{\threevec{t}(p)}=P_y(p)Q_z(p)-P_z(p) Q_y(p)\neq 0$.

We now apply \cite[Theorem 5.1]{buse2009} rephrased in the affine setting to $P$
and $Q$:
% We now rephrase an affine formulation of a lemma originally stated in the
% homogeneous case.
% \begin{Lemma}[{\cite[Theorem 5.1]{buse2009}}]
% \label{lem:buse5.1}
%   Let $f_1$ and $f_2$ be two polynomials of degree $d_1$ in $y$ and $d_2$ in $z$
%   with coefficients in an integral domain. Then 
% $$ \partial_y Res_z(f_1,f_2)= (-1)^{d_1+d_2}
% \left| \begin{matrix}
%  \partial_y f_1 & \partial_z f_1\\
%  \partial_y f_2 & \partial_z f_2\\
% \end{matrix}
% \right|
% s_{11} 
% \quad \bmod \ideal{f_1,f_2}.
% $$
% \end{Lemma}
%
%This lemma applied to $P$ and $Q$ yields
$$f_y= \pm
\left| \begin{matrix}
 P_y &  P_z\\
 Q_y & Q_z\\
\end{matrix}
\right|
s_{11} 
+ uP+vQ
$$
with $u,v$ in $\Q[x,y]$. Evaluated at $p$, $P$ and $Q$ vanish and we obtain:
$f_y(\alpha,\beta)=\pm x_{\threevec{t}(p)}
s_{11}(\alpha,\beta)$. Since $(\alpha,\beta)$ is a singular point of $f$,
$f_y(\alpha,\beta)=0$, and together with
$x_{\threevec{t}(p)}\neq0$ this gives $s_{11}(\alpha,\beta)=0$.  The gap
structure theorem and $f(\alpha,\beta)=0$ then implies that (a)
$s_{10}(\alpha,\beta)=0$, and (b) the degree of
$\gcd(P(\alpha,\beta,z),Q(\alpha,\beta,z))$ is at least two. Together with the
generic condition $(A_2)$, this degree is exactly two and so is the degree of
the second subresultant $S_2$ evaluated at $(\alpha,\beta)$, thus
$s_{22}(\alpha,\beta)\neq0$. We then conclude that in this case too
$(\alpha,\beta)$ is in $S_{sres}$.

\subsection{Regularity conditions}
\label{sec:regularity}
The main theorem of this section is the relation between the types
of singularities  of $f$ and the regularity of the solutions of the system
$s_{11}=s_{10}=0$. We assume for this section that the assumptions \((A_1)\),
\((A_2)\) and \((A_3)\) hold.

\begin{Theorem}
\label{th:nodecusp}
Let \(f\) be the resultant of the polynomials \(P\) and \(Q\) in
\(Q[x,y,z]\) with respect to the variable \(z\). If the assumptions
\((A_1)\), \((A_2)\) and \((A_3)\) hold then the following propositions
are equivalent:
\begin{enumerate}
\def\labelenumi{\roman{enumi}.}
\itemsep1pt\parskip0pt\parsep0pt
\item
  \(p\) is a regular solution of \(s_{11}=s_{10}=0\) and
  \(s_{22}(p)\neq  0\)
\item
  \(p\) is a node or an ordinary cusp of the curve \(f=0\)
\end{enumerate}
Furthermore in this case, $p$ is an ordinary cusp point if and only if $\CC$ has a vertical tangent above $p$.
\end{Theorem}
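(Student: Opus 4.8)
The plan is to work locally above a singular point $p=(\alpha,\beta)$ and to relate the $2\times 2$ Jacobian of the map $(s_{11},s_{10})$ to the Hessian of $f$ and the higher-order behaviour of $f$ along vanishing directions, using the quadratic-form identity $s_{22}^2 f = s_{10}^2 s_{22} + s_{11}^2 s_{20} - s_{10}s_{11}s_{21}$ already established in the proof of Theorem~\ref{th:sing-sres}. Since we are now under assumptions $(A_1)$--$(A_3)$, Theorem~\ref{th:sing-sres} tells us $S_{sing}=S_{sres}$, so at $p$ we automatically have $s_{11}(p)=s_{10}(p)=0$ and $s_{22}(p)\neq 0$; the content of Theorem~\ref{th:nodecusp} is the \emph{regularity} equivalence, plus the tangent characterization.

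\textbf{From the quadratic-form identity to the Hessian.} First I would differentiate $s_{22}^2 f = s_{10}^2 s_{22} + s_{11}^2 s_{20} - s_{10}s_{11}s_{21}$ twice and evaluate at $p$. Because $s_{11}(p)=s_{10}(p)=0$, all first derivatives of the right-hand side vanish at $p$ (recovering $f_x(p)=f_y(p)=0$), and the second derivatives collapse to a clean expression: writing $J$ for the Jacobian matrix of $(s_{11},s_{10})$ at $p$ and using column vectors, one gets
\begin{equation*}
s_{22}(p)^2\,\mathrm{Hessian}(f)(p) = 2\,s_{22}(p)\,(\nabla s_{10})(\nabla s_{10})^T + 2\,s_{20}(p)\,(\nabla s_{11})(\nabla s_{11})^T - s_{21}(p)\bigl((\nabla s_{11})(\nabla s_{10})^T + (\nabla s_{10})(\nabla s_{11})^T\bigr),
\end{equation*}
all gradients taken at $p$. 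The right-hand side is $[\nabla s_{11}\ \ \nabla s_{10}]\,M\,[\nabla s_{11}\ \ \nabla s_{10}]^T$ for the constant symmetric matrix $M=\left(\begin{smallmatrix} 2s_{20} & -s_{21}\\ -s_{21} & 2s_{22}\end{smallmatrix}\right)$ at $p$, whose determinant is $4s_{20}s_{22}-s_{21}^2$. Hence $\det(\mathrm{Hessian}(f)(p)) = s_{22}(p)^{-4}\,(4s_{20}s_{22}-s_{21}^2)(p)\cdot \det(J)^2$. The key observation is that $4s_{20}s_{22}-s_{21}^2$ evaluated at $p$ is (up to the nonzero factor $s_{22}(p)$) the discriminant of the degree-$2$ polynomial $S_2(\alpha,\beta,z)=\gcd(P(\alpha,\beta,z),Q(\alpha,\beta,z))$, so it is nonzero exactly when this gcd has two distinct roots, i.e. when $\CC$ has two distinct (hence both non-vertical, by $(A_1)$) points above $p$, and it vanishes exactly when $\CC$ has a single double point above $p$, which by $(A_1)$ forces a vertical tangent there.

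\textbf{Splitting into the two cases.} This identity does most of the work. \emph{Case $(4s_{20}s_{22}-s_{21}^2)(p)\neq 0$ (two simple points above $p$):} then $\det(\mathrm{Hessian}(f)(p))\neq 0 \iff \det(J)\neq 0$, i.e. $p$ is a regular solution of $s_{11}=s_{10}=0$ iff $p$ is a node; and here $\CC$ has no vertical tangent above $p$, consistent with the last sentence. \emph{Case $(4s_{20}s_{22}-s_{21}^2)(p)= 0$ (a double point, vertical tangent above $p$):} then $\det(\mathrm{Hessian}(f)(p))=0$ always, so if (i) holds $p$ can only be a cusp, and conversely if $p$ is a cusp then (i) must hold — but in this degenerate case the Hessian identity alone is not enough to read off $\det(J)$, because $M$ is singular and the quadratic form only controls $\mathrm{Hessian}(f)$, not the full germ. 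For this branch I would push the Taylor expansion of $s_{22}^2 f$ to third order along the vertical-tangent direction: writing $\gamma$ for the double root, the local structure of $P,Q$ near $(\alpha,\beta,\gamma)$ (smooth space curve $\CC$ with vertical tangent, by $(A_1)$) lets one parametrize $\CC$ and compute the order of contact of $f$ with generic lines through $p$, and show that $f$ has an ordinary cusp there (multiplicity exactly $3$ along the non-tangent directions) \emph{precisely} when the parametrization is generic in the sense that translates into $\det(J)\neq 0$. Concretely, one shows that the coefficient of $t^3$ in $f(\alpha+ut,\beta+vt)$ — which governs whether the cusp is ordinary — is, after clearing $s_{22}$, a nonzero multiple of a $2\times 2$ minor built from $\nabla s_{11}$, $\nabla s_{10}$ and the second-order data, and that this is nonzero iff $\det(J)\neq 0$.

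\textbf{Main obstacle.} The routine part is the double differentiation of the quadratic-form identity and identifying $4s_{20}s_{22}-s_{21}^2$ with the gcd-discriminant via the gap-structure/specialization properties of subresultants. The genuinely delicate part is the cusp case: one must verify that ``regular solution of $s_{11}=s_{10}=0$'' is equivalent to ``\emph{ordinary} cusp'' and not merely ``cusp'', which requires a third-order local analysis of $f$ near a vertical-tangent point of the smooth space curve $\CC$, and a careful bookkeeping showing the order-$3$ Taylor coefficient of $f$ is controlled by the same Jacobian determinant $\det(J)$. Managing the vertical-tangent parametrization of $\CC$ cleanly — choosing $z$ as local parameter and expanding $s_{11},s_{10}$ as functions of $(x,y)$ through the subresultant specialization — and keeping track of which nonzero factors ($s_{22}(p)$, leading coefficients from $(A_3)$) appear, is where the care is needed; the finiteness clause in $(A_2)$ guarantees we are at an isolated such point so the expansion is well-posed.
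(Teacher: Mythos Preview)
Your approach is essentially the paper's: both compute the Hessian of $f$ at $p$ as a $J^{T}MJ$ congruence from the quadratic-form identity $s_{22}^2 f = s_{22}s_{10}^2 - s_{21}s_{10}s_{11} + s_{20}s_{11}^2$, split on whether $\det M = 4s_{20}s_{22}-s_{21}^2$ (the discriminant of $S_2$) vanishes, and in the degenerate vertical-tangent case push the Taylor expansion to order three. The paper streamlines the cubic step by first translating $z$ so that a lifted point sits at $z=0$ (forcing $s_{20}(p)=0$, hence $s_{21}(p)=0$ once $\det M=0$), and it isolates the nonvanishing of the remaining factor in the cubic coefficient---your ``nonzero multiple''---as a separate lemma (Lemma~\ref{lem:nosing}) transferring the regularity of $\CC$ at the lifted point to regularity of $S_2=S_1=0$, which is exactly the point you flag as the main obstacle.
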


%Before proving this theorem, we will need the following Lemma.
The proof of this theorem is decomposed with the following lemmas.

\begin{Lemma}[\cite{recknagel13}]
  \label{lem:node}
  Let $p$ be a node of $f$. Then $p$ is
  a regular point of the system $s_{11}=s_{10}=0$.
\end{Lemma}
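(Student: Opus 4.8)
The plan is to differentiate twice the quadratic-form identity already obtained in the proof of Theorem~\ref{th:sing-sres}. Set $v=(s_{10},s_{11})^{\mathsf T}$ and $M=\twomatrix{s_{22}}{-s_{21}/2}{-s_{21}/2}{s_{20}}$, so that the identity there reads $s_{22}^2 f = s_{22}s_{10}^2 - s_{21}s_{10}s_{11} + s_{20}s_{11}^2 = v^{\mathsf T} M v$ in $\Q[x,y]$. First I would record what is known at the node $p=(\alpha,\beta)$: since $p$ is singular, $f(p)=f_x(p)=f_y(p)=0$; and since $S_{sing}\subset S_{sres}$ under $(A_1)$--$(A_3)$ (Theorem~\ref{th:sing-sres}), we have $s_{11}(p)=s_{10}(p)=0$ and $s_{22}(p)\neq 0$, i.e. $v(p)=0$.

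Next, let $J$ be the Jacobian matrix of $v$ with respect to $(x,y)$. Differentiating $v^{\mathsf T} M v$ twice and evaluating at $p$, every term that retains a surviving factor $s_{10}$ or $s_{11}$ vanishes there — in particular all terms carrying a first or second derivative of an entry of $M$ — so, using the symmetry of $M$, one is left with $\partial_i\partial_j(s_{22}^2 f)(p) = 2\,(\partial_i v)(p)^{\mathsf T} M(p)\,(\partial_j v)(p)$ for $i,j\in\{x,y\}$. On the other hand, since $f$, $f_x$ and $f_y$ vanish at $p$, the same quantity equals $s_{22}(p)^2\,(\mathrm{Hessian}(f))_{ij}(p)$. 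Equating these two expressions gives the $2\times 2$ matrix identity
\[
s_{22}(p)^2\,\mathrm{Hessian}(f)(p) = 2\, J(p)^{\mathsf T} M(p)\, J(p).
\]

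Finally, taking determinants yields $s_{22}(p)^4\det(\mathrm{Hessian}(f)(p)) = 4\det(J(p))^2\det(M(p))$. Since $p$ is a node, $\det(\mathrm{Hessian}(f)(p))\neq 0$, and $s_{22}(p)\neq 0$, so the left-hand side is nonzero; hence $\det(J(p))\neq 0$, which is exactly the statement that $p$ is a regular solution of the system $s_{11}=s_{10}=0$. The only mildly delicate point is the second-derivative bookkeeping — confirming that the discarded terms really do retain a factor vanishing at $p$ — but this is immediate since $s_{22}^2 f$ is a quadratic form in $(s_{10},s_{11})$ and $v(p)=0$; the remaining step is routine $2\times 2$ linear algebra.
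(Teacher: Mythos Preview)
Your proof is correct. The differentiation bookkeeping is sound: since $s_{22}^2 f$ is a homogeneous quadratic form in $v=(s_{10},s_{11})$ with coefficient matrix $M$, every second partial at a point where $v$ vanishes collapses to $2(\partial_i v)^{\mathsf T}M(\partial_j v)$, and on the left the vanishing of $f,f_x,f_y$ leaves only $s_{22}^2 f_{ij}$; taking determinants then forces $\det J(p)\neq 0$ from $\det(\mathrm{Hessian}(f)(p))\neq 0$ and $s_{22}(p)\neq 0$.

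The paper's own proof of this lemma is different: it argues abstractly via the ideal inclusion $\ideal{s_{22}^3 f, s_{22}^3 f_x, s_{22}^3 f_y}\subset\ideal{s_{11},s_{10}}$ established in Theorem~\ref{th:sing-sres}, observing that an ideal inclusion bounds the local multiplicity of $p$ in $\ideal{s_{11},s_{10}}$ by its multiplicity in $\ideal{f,f_x,f_y}$ (after inverting $s_{22}$), and the latter is $1$ at a node because $(f_x,f_y)$ is already regular there. Your route is more explicit and in fact anticipates exactly the Hessian identity $s_{22}(p)^2\,\mathrm{Hessian}(f)(p)=J(p)^{\mathsf T}A(p)J(p)$ (with $A=2M$) that the paper derives later in the proof of Theorem~\ref{th:nodecusp}; so your argument effectively merges Lemma~\ref{lem:node} into that computation. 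The paper's multiplicity argument is shorter here because it recycles the ideal inclusion wholesale, whereas your approach yields the quantitative Hessian formula up front, which is what is ultimately needed for the node/cusp dichotomy.
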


\begin{proof}
  Since $p$ is a node, it is a singular point of $f=0$ and
  Theorem~\ref{th:sing-sres} implies that $p$ is a solution of the system
  $s_{11}=s_{10}=0$. Moreover, we saw in the proof of Theorem~\ref{th:sing-sres}
  that $S_{sres}\subset S_{sing}$ but more precisely that
  $\ideal{s_{22}^3f,s_{22}^3f_x,s_{22}^3f_y}\subset \ideal{s_{11},s_{10}}$.  In
  particular, this implies that the multiplicity of $p$ in
  $\ideal{s_{11},s_{10}}$ is lower or equal to its multiplicity in
  $\ideal{s_{22}^3f,s_{22}^3f_x,s_{22}^3f_y}$. Since $p$ is a node of $f$, the
  determinant of the Hessian of $f$ is non-zero and $p$ is a regular point of
  $\ideal{f,f_x,f_y}$. And since $s_{22}(p)\neq 0$, we can conclude that the
  multiplicity of $p$ in $\ideal{s_{22}^3f,s_{22}^3f_x,s_{22}^3f_y}$ is $1$.
  Thus $p$ has also a multiplicity one in $\ideal{s_{11},s_{10}}$.
\end{proof}

\begin{Lemma}
  \label{lem:cusp}
  Let $p$ be an ordinary cusp point of $f$. Then $p$ is a regular point of the system $s_{11}=s_{10}=0$.
\end{Lemma}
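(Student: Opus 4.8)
The plan is to show that the Jacobian matrix of $(s_{11},s_{10})$ is invertible at $p$; together with $s_{11}(p)=s_{10}(p)=0$ this is exactly the assertion that $p$ is a regular solution of $s_{11}=s_{10}=0$. Since $p$ is singular, Theorem~\ref{th:sing-sres} puts $p$ in $S_{sres}=\V(s_{11},s_{10})-\V(s_{22})$, so $s_{11}(p)=s_{10}(p)=0$ and $s_{22}(p)\neq0$; moreover, the analysis in the proof of that theorem shows $\gcd\bigl(P(\alpha,\beta,z),Q(\alpha,\beta,z)\bigr)$ has degree exactly $2$. The first step is to pin this fiber down: above an ordinary cusp it must be a \emph{double} root, i.e. equal to $s_{22}(p)(z-\gamma_0)^2$. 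Indeed, if this degree‑$2$ gcd had two distinct roots, they would be simple common roots of $P(\alpha,\beta,\cdot)$ and $Q(\alpha,\beta,\cdot)$, hence by $(A_1)$ two points of $\CC$ with well‑defined, non‑vertical tangents, whose projections (distinct, by $(A_2)$) are two smooth branches of $f=0$ meeting at $p$. Two smooth branches meeting at $p$ are either transverse — so $\det(\mathrm{Hessian}(f))(p)\neq0$, a node — or share a tangent line $L$, in which case $f|_L$ vanishes at $p$ to order $\geq 4$; both cases are incompatible with the definition of an ordinary cusp. Hence the fiber is a double root, so $\Delta(p):=s_{21}(p)^2-4s_{20}(p)s_{22}(p)=0$.

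Next I complete the square in the identity $s_{22}^2f=s_{10}^2s_{22}+s_{11}^2s_{20}-s_{10}s_{11}s_{21}$ obtained in the proof of Theorem~\ref{th:sing-sres}. Setting $g:=s_{10}-\tfrac{s_{21}}{2s_{22}}s_{11}$, on a neighbourhood of $p$ (where $s_{22}\neq0$) this becomes $f=\tfrac{1}{s_{22}}g^2-\tfrac{\Delta}{4s_{22}^3}s_{11}^2$ with $\Delta=s_{21}^2-4s_{20}s_{22}$. Here $g(p)=0$, and since $s_{11}(p)=0$ we get $\nabla g(p)=\nabla s_{10}(p)-\tfrac{s_{21}(p)}{2s_{22}(p)}\nabla s_{11}(p)$, so the Jacobian of $(s_{11},s_{10})$ at $p$ is invertible if and only if that of $(s_{11},g)$ is; it therefore suffices to show that $\nabla s_{11}(p)$ and $\nabla g(p)$ are linearly independent.

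The cusp hypothesis enters through the order of vanishing of $f$ at $p$: the definition forces this order to be exactly $2$ (a generic line realises the order, which is $\leq 3$; a line of the tangent cone realises a strictly larger order, still $\leq 3$, so the order is $2$), with tangent cone a nonzero square of a linear form and cubic part not vanishing in the tangent direction; from this one deduces that $f$ restricted to \emph{any} smooth curve germ through $p$ vanishes to order $2$ or $3$ (never more, never identically). Since $\Delta(p)=s_{11}(p)=0$, the term $\tfrac{\Delta}{4s_{22}^3}s_{11}^2$ vanishes to order $\geq 3$ at $p$, so the order‑$2$ condition forces $g$ to have order $1$ at $p$, i.e. $\nabla g(p)\neq0$ and $\{g=0\}$ is a smooth curve germ at $p$. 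Restricting the identity to it gives $f|_{g=0}=-\tfrac{\Delta}{4s_{22}^3}s_{11}^2\big|_{g=0}$. If $\nabla s_{11}(p)$ were proportional to $\nabla g(p)$, then $s_{11}$ would vanish to order $\geq 2$ along $\{g=0\}$ and $\Delta$ to order $\geq 1$, so $f|_{g=0}$ would vanish to order $\geq 5$ (or identically), contradicting the previous sentence. Hence $\nabla s_{11}(p)$ and $\nabla g(p)$ are independent, and the lemma follows.

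I expect the two soft ingredients to be the delicate points, not the algebra. The first is Step~1, the identification of the fiber over a cusp as a double root (and, underlying it, that $f=0$ near $p$ is exactly the union of the projected branches of $\CC$); an alternative is to keep the formula for $f$ with $\Delta(p)$ a priori unknown and to rule out $\Delta(p)\neq 0$ directly — there, $\det(\mathrm{Hessian}(f))(p)=0$ makes the degree‑$2$ part $\tfrac{1}{s_{22}(p)}(\nabla g(p)\!\cdot h)^2-\tfrac{\Delta(p)}{4s_{22}(p)^3}(\nabla s_{11}(p)\!\cdot h)^2$ a square of a linear form, which (a difference of two scaled squares of linear forms) only happens when $\nabla g(p)$ and $\nabla s_{11}(p)$ are dependent, and the same ``restrict $f$ to a smooth curve germ'' argument then produces an order‑$\geq 4$ contradiction; the fiber description is nonetheless cleaner. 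The second is the statement that at an ordinary cusp $f$ restricted to any smooth curve germ vanishes to order $\leq 3$: this upgrades the line‑based definition used in the paper and is proved by the tangent‑cone/cubic‑part bookkeeping indicated above.
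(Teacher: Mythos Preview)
Your proof is correct, but it takes a markedly different and longer route than the paper's. The paper argues by contradiction in a few lines: if the Jacobian of $(s_{11},s_{10})$ at $p$ were singular, there would be a direction $(u,v)$ along which both $s_{11}$ and $s_{10}$ vanish to order $\geq 2$ on the line $t\mapsto p+t(u,v)$; substituting into $s_{22}^2f=s_{22}s_{10}^2-s_{21}s_{10}s_{11}+s_{20}s_{11}^2$ makes the right side vanish to order $\geq 4$ along that line, while the left side vanishes to order $\leq 3$ since $s_{22}(p)\neq 0$ and $p$ is an ordinary cusp. That's it---no fiber analysis, no completing the square, no upgrade from lines to smooth germs. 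Your approach, by contrast, extracts more: you prove $\Delta(p)=0$ (the fiber is a double root), which the paper only obtains later inside the proof of Theorem~\ref{th:nodecusp}, and your restriction-to-$\{g=0\}$ argument is a genuine alternative that anticipates the ``tangent of $\CC$ is vertical above a cusp'' statement. The cost is the two soft ingredients you flag (local factorization of $f$ into projected branches, and the germ-vs-line upgrade), both of which are true but need the justification you sketch; the paper's line-only argument sidesteps them entirely.
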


\begin{proof}
  Let $p=(\alpha, \beta)$ be an ordinary cusp point of $f$. Suppose by
  contradiction that $p$ is a singular solution of $s_{11}=s_{10}=0$. Then the
  determinant of the Jacobian matrix
  $\twomatrix{{s_{11}}_x}{{s_{10}}_x}{{s_{11}}_y}{{s_{10}}_y}$ is $0$ and there
  exists a vector $(u,v)\in \mathbb{R}^2\setminus\{(0,0)\}$ orthogonal
  simultaneously to the gradient of $s_{11}$ and to the gradient of
  $s_{10}$. In particular,
  $s_{11}(\alpha+ut,\beta+vt)$ (resp. $s_{10}(\alpha+ut,\beta+vt)$)
  vanishes at $0$ in $t$ with multiplicity at least $2$. 
% In addition, by
%   assumptions, $p$ is an ordinary cusp of $f$ and the polynomial
%   $f(\alpha+ut, \beta+vt)$ vanishes at $0$ in $t$ with multiplicity at
%   most $3$.
  %
  Using standard formula on the resultants (\cite[Theorem 4.1]{Kahoui03}
  for example) we have $s_{22}^2f = Res(S_1, S_2)$. Developing the right
  hand side we get:
  $$s_{22}^2f = s_{22}s_{10}^2 - s_{21}s_{11}s_{10} + s_{20}s_{11}^2.$$
  Thus, evaluating the right hand side on $(\alpha+ut,\beta+vt)$, we
  observe that it vanishes at $0$ in $t$ with multiplicity at least $4$.
  
  On the other hand, $p$ being an ordinary cusp of $f$, the polynomial
  $f(\alpha+ut, \beta+vt)$ vanishes at $0$ in $t$ with multiplicity at most
  $3$. In addition, under the assumptions $(A_2)$ and $(A_3)$, we have
  $s_{22}(p)\neq 0$ and the left hand side vanishes at $0$ in $t$ with
  multiplicity at most $3$, hence the contradiction.
\end{proof}

\begin{Lemma}
\label{lem:nosing}
Let $q=(\alpha,\beta,\gamma)$ be a regular point of the curve $\CC$ such that
$s_{22}(p)\neq 0$ with $p=(\alpha,\beta)$. Then $q$ is a regular point of the
curve \(S_2(x,y,z)=S_1(x,y,z)=0\).
Moreover, the vectors $\nabla P(q),\nabla Q(q)$ generate the same vector space
as $\nabla S_2(q)$ and $\nabla S_1(q)$.
% Under the assumptions \((A_1)\),
% \((A_2)\)
% and \((A_3)\),
% the points of the curve \(P(x,y,z)=Q(x,y,z)=0\)
% are regular points of \(S_2(x,y,z)=S_1(x,y,z)=0\).
% Moreover for all $q \in \CC$, the vectors $\nabla P(q),\nabla Q(q)$ generate
% the same vector space as $\nabla S_2(q)$ and $\nabla S_1(q)$.
\end{Lemma}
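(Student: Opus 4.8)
The plan is to derive both assertions from the subresultant cofactor identities together with a reducedness argument; the hypothesis $s_{22}(p)\neq 0$ and the genericity of $P,Q$ enter only at the last step.

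First I would use the subresultant cofactor identities $S_1 = U_1 P + V_1 Q$ and $S_2 = U_2 P + V_2 Q$, with $U_i,V_i\in\Q[x,y,z]$, of the same kind as the identities already used in the proof of Theorem~\ref{th:sing-sres}. Since $q\in\CC$ means $P(q)=Q(q)=0$, evaluating gives $S_1(q)=S_2(q)=0$, so $q$ indeed lies on the curve $S_2=S_1=0$; differentiating and evaluating at $q$ gives $\nabla S_i(q)=U_i(q)\nabla P(q)+V_i(q)\nabla Q(q)$, so $\nabla S_1(q)$ and $\nabla S_2(q)$ both lie in the plane $W$ spanned by $\nabla P(q)$ and $\nabla Q(q)$, which is $2$-dimensional because $\threevec{t}(q)=\nabla P(q)\times\nabla Q(q)\neq 0$ by hypothesis. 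The statement then reduces to the single claim that $\nabla S_1(q)$ and $\nabla S_2(q)$ are linearly independent: this is precisely ``$q$ is a regular point of $S_2=S_1=0$'', and it also makes the span of $\nabla S_1(q)$ and $\nabla S_2(q)$ a $2$-dimensional subspace of the $2$-dimensional space $W$, hence equal to $W$, which is the ``moreover'' part.

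To prove this independence I would show that $\V(S_1,S_2)$ is reduced at $q$ and coincides near $q$ with the smooth curve $\CC$, and then invoke the Jacobian criterion. Three things need to be checked. (i) $S_1$ and $S_2$ have no common factor vanishing at $q$: a factor involving $z$ would be, up to a factor in $\Q[x,y]$, the primitive part of $S_1=s_{11}z+s_{10}$, and would force $Res_z(S_1,S_2)\equiv 0$, contradicting $Res_z(S_1,S_2)=s_{22}^2 f\not\equiv 0$; a factor in $\Q[x,y]$ would divide $s_{22}$, hence not vanish at $q$ since $s_{22}(p)\neq 0$. Thus $S_1,S_2$ form a regular sequence at $q$, so $\V(S_1,S_2)$ is a complete intersection curve there, in particular Cohen--Macaulay with no embedded component. (ii) Near $q$, $\V(S_1,S_2)=\CC$ set-theoretically: over a point of the $(x,y)$-plane near $p$ but off $\mathcal C$ the fibre is empty because $Res_z(S_1,S_2)=s_{22}^2 f$ does not vanish there; over a point of $\mathcal C$ near $p$, a short analysis using $(A_2)$, $(A_3)$ and the gap structure theorem shows the fibre consists of common roots of $P$ and $Q$, hence lies in $\CC$; and $\CC$ is smooth near $q$ by $(A_1)$. (iii) The multiplicity of $\CC$ in $\V(S_1,S_2)$ at $q$ is $1$: pushing the divisor of $\V(S_1,S_2)$ forward along the (locally finite) projection to the $(x,y)$-plane gives the divisor of $Res_z(S_1,S_2)=s_{22}^2 f$, so the multiplicity of $\CC$ in $\V(S_1,S_2)$ times the degree of the projection $\CC\to\mathcal C$ (each factor being $\ge 1$) is at most the order of $s_{22}^2 f$ along the corresponding branch of $\mathcal C$, which is exactly $1$ since $f$ is square-free and $s_{22}(p)\neq 0$; hence that multiplicity is $1$. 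Consequently $\V(S_1,S_2)$ is generically reduced along $\CC$ near $q$; with (i) and Serre's criterion it is reduced near $q$, and the Jacobian criterion for the smooth curve $\CC=\V(S_1,S_2)$ at $q$ yields that $\nabla S_1(q)$ and $\nabla S_2(q)$ are linearly independent.

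The hard part is step (iii): one must rule out that $\V(S_1,S_2)$ carries a multiple structure along $\CC$ at $q$, which would make $\nabla S_1(q)$ and $\nabla S_2(q)$ dependent even though $\V(S_1,S_2)$ and $\CC$ agree as sets near $q$. This is exactly where square-freeness of $f$, the hypothesis $s_{22}(p)\neq 0$ --- so that $Res_z(S_1,S_2)=s_{22}^2 f$ records $\mathcal C$ with multiplicity one near $p$ --- and $(A_2)$ are all used. The subtlety concentrates in the case where $\CC$ has a vertical tangent above $p$: there $\nabla P(q),\nabla Q(q),\nabla S_1(q),\nabla S_2(q)$ are all horizontal, so one cannot separate $\nabla S_1(q)$ from $\nabla S_2(q)$ via their $z$-components, unlike the node case where $\partial_z S_1(q)=s_{11}(\alpha,\beta)=0$ while $\partial_z S_2(q)\neq 0$. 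A more computational alternative to step (iii) would be to evaluate $\det\left(\begin{smallmatrix}U_2&V_2\\U_1&V_1\end{smallmatrix}\right)$ at $q$ via the telescoping product of the pseudo-division steps of the subresultant chain, but that is messier when the chain has defective subresultants.
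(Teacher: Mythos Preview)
Your argument is correct in outline, but it takes a much longer route than the paper and uses hypotheses the paper does not need (square-freeness of $f$, the assumptions $(A_2)$, $(A_3)$).

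You used the standard cofactor identities $S_i=U_iP+V_iQ$, which only give the inclusion $\mathrm{span}(\nabla S_1(q),\nabla S_2(q))\subseteq W:=\mathrm{span}(\nabla P(q),\nabla Q(q))$; you then had to invoke Cohen--Macaulayness, Serre's criterion, and a push-forward computation with $Res_z(S_1,S_2)=s_{22}^2f$ to show the reverse inequality of dimensions. The paper observes instead that there are cofactor identities going the \emph{other} way (\cite[Theorem 4.2]{Kahoui03}):
\[
s_{22}^2 P = U S_2 + V S_1, \qquad s_{22}^2 Q = U' S_2 + V' S_1.
\]
Differentiating these at $q$ and using $P(q)=Q(q)=S_1(q)=S_2(q)=0$ together with $s_{22}(p)\neq 0$ gives $\nabla P(q),\nabla Q(q)\in\mathrm{span}(\nabla S_1(q),\nabla S_2(q))$. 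Since $q$ is regular on $\CC$, the left-hand vectors span a $2$-plane; hence so do $\nabla S_1(q),\nabla S_2(q)$, and both conclusions follow at once. This is a two-line proof that uses only the two hypotheses actually stated in the lemma.

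What each approach buys: your route is more ``geometric'' and would generalize to situations where no reverse Bezout-type identity is available, but it requires nontrivial commutative-algebra input and extra global hypotheses; its most delicate step (your step~(iii), the push-forward identity $\pi_*[\V(S_1,S_2)]=[\mathrm{div}(s_{22}^2f)]$ over the locus $s_{22}\neq 0$, including points where $s_{11}$ vanishes) is correct but deserves a more careful justification than a one-line appeal. The paper's approach is essentially immediate once one knows the reverse identities, and it isolates exactly why the hypothesis $s_{22}(p)\neq 0$ is the only thing needed beyond regularity of $q$ on $\CC$. Your closing remark about evaluating $\det\left(\begin{smallmatrix}U_2&V_2\\U_1&V_1\end{smallmatrix}\right)(q)$ is in fact pointing in the right direction: the existence of the reverse identities is equivalent to that determinant being a power of $s_{22}$ up to units, which is precisely what makes the short argument work.
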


\begin{proof}
Using the identities of \cite[Theorem 4.2]{Kahoui03}, there exists
\(U,V,U',V'\) such that:
\begin{align*}
    s_{22}^2 P & = U S_2 + V S_1\\
    s_{22}^2 Q & = U'S_2 + V'S_1\\
\end{align*}
Since $s_{22}(p)\neq 0$, we have:
\begin{align*}
    \nabla P(q) & = \frac{U(q)}{s_{22}(p)^2} \nabla S_2(q)  + \frac{V(q)}{s_{22}(p)^2} \nabla S_1(q) &
    \nabla Q(q) & = \frac{U'(q)}{s_{22}(p)^2} \nabla S_2(q) + \frac{V'(q)}{s_{22}(p)^2} \nabla S_1(q)
\end{align*}
Since \(q\) is a regular point of \(P=Q=0\), 
\(\nabla P(q)\) and \(\nabla Q(q)\) generate a dimension \(2\) vector space.
Thus \(\nabla S_2(q)\) and \(\nabla S_1(q)\) also generate the same dimension \(2\)
vector space and \(q\) is a regular point of the curve \(S_2=S_1=0\).
\end{proof}

\begin{proof}[Proof of Theorem \ref{th:nodecusp}]
  The implication ii. \(\implies\) i. is a direct corollary of Lemma \ref{lem:node} and
  \ref{lem:cusp}.

For the reciprocal, we know that $s_{22}(p)\neq 0$, thus
\[ f = \frac1{s_{22}}s_{10}^2 + \frac1{s_{22}^2}s_{20}s_{11}^2
       - \frac1{s_{22}^2}s_{21}s_{10}s_{11} \]
Let us denote by \(A\), \(J\) and \(V\) the matrices and the
vector
\begin{align*}
  A & = \left(\begin{array}{cc} 2 s_{22} & -s_{21} \\
                                -s_{21} & 2 s_{20} \end{array}\right) &
  J & =  \left(\begin{array}{cc} {s_{10}}_x & {s_{10}}_y \\
                                 {s_{11}}_x & {s_{10}}_y \end{array}\right)&
  V & = \left(\begin{array}{c} s_{10} \\ s_{11} \end{array}\right) &
\end{align*}
The resultant satisfies \(f = \frac1{2 s_{22}^2} V^t \cdot A \cdot V\).
Let \(p\)
be a singular point of the curve \(f=0\).
According to Theorem~\ref{th:sing-sres}, \(s_{11}(p)=s_{10}(p)=0\).
Moreover, without restriction of generality, we can assume that $(\alpha,\beta,0)$
satisfy $P(\alpha,\beta,0)=Q(\alpha,\beta,0)=0$ using the property that the resultant is
invariant by translation of \(z\)
in \(P\) and \(Q\). In this case, we have also \(s_{20}(p)=0\).

With abuse of notations, we denote by \(O_k(x,y)\) a polynomial that is
in the ideal \(\ideal{x,y}^k\) where \(k\) is a positive integer. In
particular we have:
\begin{align*}
   O_{k_1}(x,y) \cdot O_{k_2}(x,y) & = O_{k_1+k_2}(x,y) \\
   O_{k_1}(x,y) +     O_{k_2}(x,y) & = O_{min(k_1,k_2)}(x,y) \\
   \delta O_{k}(x,y) & = O_{k-1}(x,y) \text{ for } \delta=\partial/\partial_x
                       \text{ or } \partial/\partial_y
\end{align*}
With this notation, the taylor expansion of $V$ at $p$ gives

\[V(p+ (x,y))=  J(p) \left(\begin{array}{c}
x\\y\end{array}\right)+ O_2(x,y)  \]
such that :
\[f(p+ (x,y))= \frac1{2 s_{22}(p)^2}\left(x\quad y\right)J(p)^t \cdot A(p) \cdot J(p) \left(\begin{array}{c}
x\\y\end{array}\right)+ O_3(x,y)\]
This implies that the Hessian of \(f\) at $p$ is the matrix
\(\frac1{s_{22}(p)} J(p)^t \cdot A(p) \cdot J(p)\). If the determinant of the
Hessian is not zero, then \(p\) is a node. Otherwise we have
\(\det(A(p))\det(J(p))^2=0\). Let us prove in this case that \(p\) is
an ordinary cusp in \(f\). For that, we need to prove that for every
direction \((u,v)\neq(0,0)\), the valuation of \(t\) in \(f(ut,vt)\) is
lower or equal to \(3\). By hypothesis $i.$, \(\det(J)\neq 0\), thus
\(\det(A(p)) = 4s_{22}(p)s_{20}(p)-s_{21}(p)^2 = 0\). In particular, this
means that \(s_{21}(p)=0\). In particular recalling that:
\[ f = \frac1{s_{22}}s_{10}^2 + \frac1{s_{22}^2}s_{20}s_{11}^2
       - \frac1{s_{22}^2}s_{21}s_{10}s_{11} \]
we have for \((u,v)\) such that
\(a := u{s_{10}}_x(p)+v{s_{10}}_y(p)\neq0\):

\begin{align*}
s_{10}^2          (\alpha+ut,\beta+vt) &= a^2t^2 + O_3(t)\\
s_{20}s_{11}^2    (\alpha+ut,\beta+vt) &= O_3(t)\\
s_{21}s_{10}s_{11}(\alpha+ut,\beta+vt) &= O_3(t)
\end{align*}
This implies:
\[ f(\alpha+ut,\beta+vt) = \frac1{s_{22}(p)^2}a^2t^2 + O_3(t) \]
and for \((u,v)\) such that \(u{s_{10}}_x(p)+v{s_{10}}_y(p)=0\) there
exists a constant \(c\neq0\) such that
\((u,v) = (c{s_{10}}_y, -c{s_{10}}_x)\) and we have:

% \nolinenumbers
% \begin{nolinenumbers}
\begin{dmath*}
f(\alpha+ut,\beta+vt) = \frac{c^3}{s_{22}(p)^2}({s_{20}}_x(p){s_{10}}_y(p)-{s_{20}}_y(p){s_{10}}_x(p))
                                  ({s_{11}}_x(p){s_{10}}_y(p)-{s_{11}}_y(p){s_{10}}_x(p))^2t^3
                                  + O_4(t)
         = \frac{c^3}{s_{22}(p)^2}\det(G(p))\det(J(p))^2t^3 + O_4(t)
\end{dmath*}
% \end{nolinenumbers}
where 
\[ G := \left(\begin{array}{cc} {s_{20}}_x & {s_{20}}_y\\
                               {s_{10}}_x & {s_{10}}_y \end{array}\right)\]
Lemma \ref{lem:nosing} implies that $(\alpha,\beta,0)$ is a regular point of
$S_2(x,y,z)=S_1(x,y,z)=0$. On the other hand,

\begin{align*}
\nabla S_1(\alpha,\beta,0) = \left( {s_{10}}_x(p) \quad {s_{10}}_y(p) \quad s_{11}(p) \right)\\
\nabla S_2(\alpha,\beta,0) = \left( {s_{20}}_x(p) \quad {s_{20}}_y(p) \quad s_{21}(p) \right)
\end{align*}
Since \(s_{11}(p)=s_{21}(p)=0\), the point $(\alpha,\beta,0)$ is regular in
$S_2(x,y,z)=S_1(x,y,z)=0$ only if the determinant of the matrix \(G(p)\) is
different from zero. In addition, hypothesis $i.$ implies
\(\det(J(p))\neq0\). We thus conclude that for 
every \((u,v)\neq(0,0)\), the valuation of \(t\) in \(f(ut,vt)\) is
lower or equal to \(3\), and \(p\) is an ordinary cusp.

Finally, we prove that $p$ is an ordinary cusp if and only if $\CC$ has a
vertical tangent above $p$ at $q=(\alpha,\beta,0)$. First, if $p$ is an ordinary
cusp, then the Hessian of $f$ is zero at $p$ and $\det(A(p))=0$. In this case we
saw that $s_{21}(p)=0$ and since $s_{11}(p)=0$, this implies that
$\frac{\partial S_2}{\partial z}(q) = s_{21}(p)=0$ and
$\frac{\partial S_1}{\partial z}(q)=s_{11}(p)=0$. Using Lemma~\ref{lem:nosing}
this implies that
$\frac{\partial P}{\partial z}(q)=\frac{\partial Q}{\partial z}(q)=0$ such that
the tangent vector of $\CC$ at $q$ is vertical. Reciprocally, if the tangent
vector of $\CC$ at $q$ is vertical, then
$\frac{\partial P}{\partial z}(q)=\frac{\partial Q}{\partial z}(q)=0$ and
Lemma~\ref{lem:nosing} implies that $\frac{\partial S_2}{\partial z}(q) = 0$, thus $S_2$ has a double root in $z$ and $\det(A(p))=0$. Thus the Hessian of
$f$ is zero at $p$ and $p$ is an ordinary cusp of $f$.
\end{proof}

\subsection{Checking the assumptions}
%%%%%%%%%%%%%%%%%%%%%%%%%%%
\label{sec:checkgeneric}
As opposed to symbolic methods, our numerical approach requires assumptions on
the input. To be complete we provide a way to check that the assumptions are
fulfilled using only numerical methods.
% Following the results of Theorems~\ref{th:sing-sres} and
% \ref{th:nodecusp}, Algorithm~\ref{algo:subdiv-sing} isolates the
% singularities via the subresultant system $\{s_{11}, s_{10}\}$.
%In addition to
%assumptions $(A1)$, $(A2)$ and $(A3)$, this leads to introduce the regularity
%assumption:

% \MP{To keep the isolation algo a real algo and not a semi-algo, we may add the
%   assumptions
%   \begin{itemize}
%   \item[$(A4)$] The singularities of a resultant curve are only nodes, and the
%     singularities of a discriminant curve are only nodes or ordinary cusps. Note
%     that under assumptions $(A1)$, $(A2)$ and $(A3)$, Theorem~\ref{th:sing-sres}
%     states that it is equivalent to say that the singularities are the regular
%     solutions of the system $\{s_{11},s_{10}\}$, hence this can be checked with
%     $0\not\in \Box Jacobian(s_{11},s_{10})(B)$.
%   % \item[$(A4')$] for a resultant curve, there are only nodes. This is ensured by
%   %   adding the condition $0\not\in \Box\det(\text{Hessian}(f))(B)$ in the
%   %   Semi-algorithm~\ref{algo:assuption-check}.
%   % \item[$(A5')$] for a discriminant curve, there are only nodes or ordinary
%   %   cusps. This is ensured by adding the condition $0\not\in
%   %   \Box\det(\text{Hessian}(f))(B)$ or { $K_{(P,P_z,P_{zz})}(B\times I_z)
%   %     \subset int(B\times I_z)$ } in the
%   %   Semi-algorithm~\ref{algo:assuption-check}.
%   \end{itemize}
% } 

\begin{Lemma}
  The semi-algorithm~\ref{algo:assuption-check} terminates iff the assumptions
  $(A_1)$, $(A_2)$, $(A_3)$ and  $(A_4)$ are satisfied.
\end{Lemma}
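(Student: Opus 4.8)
The statement is an equivalence, and the plan is to prove the two implications separately: that termination implies the assumptions (soundness) and that the assumptions imply termination (completeness), the latter being where the work lies. Throughout I use that Algorithm~\ref{algo:assuption-check} is a subdivision scheme on $B_0$: each box is submitted to a finite list of interval tests and is either \emph{discarded} (when a test certifies the relevant local property) or split, the procedure halting exactly when every box has been discarded. The first step is to record what each test certifies. With the convergent extensions $\Box$, the tests for $(A_1)$, for $(A_3)$, and --- after the reduction through the gap structure theorem used in Theorem~\ref{th:sing-sres} --- for the ``degree $\le 2$'' part of $(A_2)$ are \emph{emptiness tests}: on a box $B$ one checks that the interval evaluation of at least one polynomial of an appropriate auxiliary system does not contain $0$, which soundly certifies that $B$ contains no point violating that assumption. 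For the ``finitely many degree-two points'' part of $(A_2)$ and for $(A_4)$ I would not evaluate $f,f_x,f_y$ directly; instead I invoke Theorem~\ref{th:sing-sres}, which under $(A_1)$--$(A_3)$ identifies the singular locus with $S_{sres}=\V(s_{11},s_{10})-\V(s_{22})$, and Theorem~\ref{th:nodecusp}, which says that a point of $S_{sres}$ is a node or an ordinary cusp exactly when it is a \emph{regular} solution of the square system $s_{11}=s_{10}=0$ with $s_{22}\neq 0$. So the relevant test on a box $B$ reads: discard $B$ if $\Box s_{11}$ or $\Box s_{10}$ does not contain $0$; otherwise, once $\Box s_{22}$ is bounded away from $0$ on $B$, apply the Krawczyk test $K_F$ of Lemma~\ref{cri:regsol} with $F=(s_{11},s_{10})$, whose success certifies a unique regular solution in $B$, hence, by the two theorems, a node or an ordinary cusp.

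For soundness, suppose the algorithm halts. By soundness of the interval and Krawczyk tests, the emptiness tests certify that the sets violating $(A_1)$, $(A_3)$ and the ``degree $\le 2$'' clause of $(A_2)$ are empty in $B_0$; and the Krawczyk certificates exhibit, in each flagged box, exactly one regular solution of $s_{11}=s_{10}=0$ with $s_{22}\neq 0$, all remaining boxes being discarded, so $\V(s_{11},s_{10})\cap B_0$ is the finite set $S_{sres}$ and, by Theorems~\ref{th:sing-sres} and~\ref{th:nodecusp}, $S_{sing}$ consists only of nodes and ordinary cusps. This yields the finiteness clause of $(A_2)$ and $(A_4)$, so the assumptions hold.

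For completeness, assume $(A_1)$--$(A_4)$ and argue that only finitely many subdivisions occur. Apply K\"onig's lemma to the (hypothetically) infinite subdivision tree: it would produce a nested sequence of boxes shrinking to a single point $p\in B_0$. The emptiness-type conditions assert that certain closed semialgebraic sets are \emph{empty} on the relevant compact region --- here $(A_3)$ is precisely what lets me bound, by a Cauchy bound uniform over $B_0$, the $z$-coordinates of the points of $\CC$ above $B_0$ (split $B_0$ into two compact pieces on which $L_P$, resp. $L_Q$, stays away from $0$), so that the $(A_1)$ test really ranges over a compact set. Since $p$ violates none of the assumptions, the convergence of $\Box$ makes the corresponding emptiness test succeed on a small enough box around $p$, unless $p\in\V(s_{11},s_{10})$; but under $(A_1)$--$(A_4)$ and Theorems~\ref{th:sing-sres}, \ref{th:nodecusp}, $\V(s_{11},s_{10})\cap B_0=S_{sres}$ is a finite set of regular solutions of $F=(s_{11},s_{10})$ on which $s_{22}\neq 0$, so around such a $p$ the adaptive convergence of the Krawczyk operator on a regular zero (Lemma~\ref{cri:regsol}) makes $K_F$ succeed once the box is small enough --- a contradiction in every case. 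Hence the algorithm halts.

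The main obstacle is precisely this last step: $(A_2)$'s finiteness clause and $(A_4)$ are not open conditions and cannot be certified by a naive interval test on $f,f_x,f_y$; the whole point is to route them through the square, well-constrained reformulation $s_{11}=s_{10}=0$, $s_{22}\neq 0$ of Theorems~\ref{th:sing-sres} and~\ref{th:nodecusp} and then to exploit the adaptive convergence of the Krawczyk test for regular solutions. Two technical points need care. First, singularities on the boundary $\partial B_0$: the genericity must be read so that none occur (or $\partial B_0$ is treated separately), otherwise a box straddling the boundary need never be discarded. Second, the identity $\V(s_{11},s_{10})\cap B_0=S_{sres}$ used in the completeness direction --- equivalently, that no point of $\V(s_{11},s_{10})$ in $B_0$ lies on $\V(s_{22})$ --- requires the defective-case form of the gap structure theorem together with $(A_2)$; checking this is the one place where the subresultant bookkeeping must be done carefully, since otherwise a box straddling $\V(s_{22})$ could be subdivided forever.
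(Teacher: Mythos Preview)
Your overall architecture---split into soundness and completeness, with K\"onig's lemma for the latter and convergence of the interval extensions as the engine---matches the paper's (very terse) argument. But you have misread what Semi-algorithm~\ref{algo:assuption-check} actually does at two places, and your proof is therefore about a slightly different procedure.

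First, the $(A_4)$ check at line~\ref{alg:lineA4} is \emph{not} a Krawczyk test: the algorithm simply evaluates $\Box\det\mathrm{Jac}(s_{11},s_{10})(B)$ and subdivides if $0$ lies in that interval. There is no existence/uniqueness certificate here, only a pointwise nonvanishing test for the Jacobian determinant. Soundness of $(A_4)$ then reads: on every discarded box reaching this branch one has $s_{22}\neq 0$ and $\det\mathrm{Jac}(s_{11},s_{10})\neq 0$ throughout, so any zero of $(s_{11},s_{10})$ in $B_0$ with $s_{22}\neq 0$ is regular, and Theorem~\ref{th:nodecusp} gives node/ordinary cusp. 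Completeness at a limit point $p\in S_{sres}$ follows because $(A_4)$ plus Theorem~\ref{th:nodecusp} force $\det\mathrm{Jac}(s_{11},s_{10})(p)\neq 0$, so convergence of $\Box$ clears line~\ref{alg:lineA4} on small boxes. Your Krawczyk-based argument would also work, but it is not what the algorithm does, and it is strictly heavier: Krawczyk needs the Jacobian to be invertible \emph{and} the contraction to land in the interior, whereas the paper only needs the scalar determinant to avoid $0$.

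Second, the $(A_1)$ check does not use a Cauchy bound on the $z$-coordinate. The algorithm computes $I_z$ explicitly from the subresultants: $I_z=-\Box s_{10}/\Box s_{11}$ in the one-root branch (line~\ref{alg9}) and the interval solution of $\Box s_{22}z^2+\Box s_{21}z+\Box s_{20}=0$ in the two-root branch (line~\ref{alg17}); it then evaluates $\Box\threevec t$ on $B\times I_z$. Your Cauchy-bound compactness argument is a valid alternative for an abstract emptiness test, but again it is not the mechanism here, and you should instead argue that for $B$ shrinking to $p$ the interval $I_z$ converges to the actual root(s) of the gcd above $p$, so $\Box\threevec t(B\times I_z)\to\threevec t(q)\neq 0$ under $(A_1)$.

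Your treatment of the ``$s_{22}=0$'' obstruction and of boundary singularities is more careful than the paper's, and correct: the identity $s_{22}^2 f=s_{22}s_{10}^2+s_{20}s_{11}^2-s_{21}s_{10}s_{11}$ together with the degree-$\le 2$ clause of $(A_2)$ indeed forces $f=s_{11}=0\Rightarrow s_{22}\neq 0$ on $B_0$, so the cascade $f\to s_{11}\to s_{22}$ always bottoms out. Once you replace Krawczyk by the Jacobian test and the Cauchy bound by the subresultant-derived $I_z$, your proof lines up with the paper's.
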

\begin{proof}
  We first show that if the semi-algorithm terminates then $(A_1)$, $(A_2)$,
  $(A_3)$ and $(A_4)$ are satisfied.  Indeed, for any box of the subdivision,
  (a) Lines \ref{alg5} %and \ref{alg6}
  ensures that the leadings of $P$ and $Q$ have no common solutions $(A_3)$; (b)
  Lines \ref{alg7}, \ref{alg9} and \ref{alg17} ensures that $f,s_{11}$ and
  $s_{22}$ do not vanish simultaneously, hence there is at most two points of
  the curve $\CC$ above each point of $B_0$, $(A_2)$ is satisfied; (c) Lines
  \ref{alg13} and \ref{alg23} ensures that the curve $\CC$ is smooth $(A_1)$;
  Line~\ref{alg:lineA4} finally ensures the regularity assumption $(A_4)$.

  Conversely, it is easy to see that when the assumptions $(A_1)$, $(A_2)$,
  $(A_3)$ and $(A_4)$ are satisfied Semi-algorithm~\ref{algo:assuption-check}
  will terminate due to the convergence of the interval functions to the actual
  value of the corresponding function when the diameter of a box tends to 0.
\end{proof}

 % check genericity: over a $xy$-box, the isolation algo for singularities will
 %  terminate iff generic
 %  \begin{itemize}
 %  \item if $res\neq 0$ then no curve above (the 3D curve is smooth!)
 %  \item elif $sres_{11}\neq 0$, the 3d curve is smooth if in $z$ the interval
 %    $-s_{10}/s_{11}$, one of the 3 minors of the jacobian of $P,Q$ wrt $x,y,z$
 %    or $P$ or $Q$ do not vanish. Also check the leadings $L_P, L_Q$ do not both
 %    vanish.
 %  \item elif $sres_{22}\neq 0$, the one/two $z$-intervals are the interval
 %    roots of $Sres_2$, check as before that the 3d curve is smooth:
 %    non-vanishing of one of the 3 minors of the jacobian of $P,Q$ wrt $x,y,z$ or
 %    $P$ or $Q$. Also check the leadings $L_P, L_Q$ do not both vanish. 
 %    \item else subdivide
 %  \end{itemize}
\floatname{algorithm}{Semi-algorithm}
\newcommand{\continue}{\textbf{continue}}
\renewcommand{\And}{\textbf{and}\xspace}
\newcommand{\Or}{\textbf{or}\xspace}
\begin{algorithm}[t]
    \caption{Subdivision based checking of assumptions $(A_1)$, $(A_2)$, $(A_3)$ and
    $(A_4)$}
\label{algo:assuption-check}
\begin{algorithmic}[1]

  \Require{A box $B_0$ in $\R^2$ and two polynomials $P$ and $Q$ in
    $\Q[x,y,z]$.}

    \Ensure{The semi-algorithm terminates iff the assumptions $(A_1)$, $(A_2)$, $(A_3)$ and  $(A_4)$ are satisfied.}
 
  \State Let $f$ be the resultant and $s_{22},s_{11},s_{10}$ be the subresultant
  coefficients of $P$ and $Q$ wrt $z$.
% \GM{J'avais mis let pour laisser l'option
%   de garder la représentation matricielle évaluée en chaque interval}

  \State $L:=\{B_0\}$

  \Repeat
  \State $B:=L.pop$

  \If{ $0 \in \Box L_P(B)$ \And $0 \in \Box L_Q(B)$}  \label{alg5}
  \Comment{Checking $(A_3)$}
  \State {Subdivide $B$ and insert its children in $L$, \continue}\label{alg6}

  \ElsIf{$0 \not\in \Box f(B)$ }\label{alg7}
  \Comment{{\small Checking if $P$ and $Q$ have no common solution} ($A_2$)}
  \State \continue \label{alg8}

  \ElsIf{$0 \not\in \Box s_{11}(B)$}\label{alg9}
  \Comment{{\small Checking if $P$ and $Q$ have at most $1$ common solution} ($A_2$)}
  \State {$I_z:=-\Box s_{10}(B)/\Box s_{11}(B)$}
  \If{%$0 \in \Box P(B\times I_z)$ \And $0 \in \Box Q(B\times I_z)$ \And
      $(0,0,0) \in \Box \threevec{t}(B\times I_z)$}
      \label{alg13}
      \Comment{Checking $(A_1)$}
  \State {Subdivide $B$ and insert its children in $L$, \continue}
  \Else \State \continue  
  \EndIf
  \ElsIf{$0 \not\in \Box s_{22}(B)$}\label{alg17}
  \Comment{{\small Checking if $P$ and $Q$ have at most $2$ common solutions} ($A_2$)}
  \State {$I_z :=$ union of the complex boxes solution of: $\Box s_{22}(B)z^2 + \Box s_{21}(B)z + \Box s_{20}(B) = 0$}
  \If{%$0 \in \Box P(B\times I_z)$ \And $0 \in \Box Q(B\times I_z)$\\\quad \quad \And 
      $(0,0,0) \in \Box \threevec{t}(B\times I_z)$}
      \label{alg23}
  \Comment{Checking $(A_1)$}
  \State {Subdivide $B$ and insert its children in $L$, \continue}

  \ElsIf {$0\in \Box Jacobian(s_{11},s_{10})(B)$ 
    \label{alg:lineA4}
 %    \MPc{this test ensures that the isolation algo~\ref{algo:subdiv-sing}
%       terminates, an alternative could be $0\in \Box \det(\text{Hessian}(f))(B)$
%       \And in the discriminant case
%       $K_{(P,P_z,P_{zz})}(B\times I_z) \not\subset int(B\times I_z)$}
}
  \Comment{Checking $(A_4)$ }
  \State {Subdivide $B$ and insert its children in $L$, \continue}
  \Else \State \continue %\MP{+ check A4 for res or A5 for disc}
  %\IF{($0 \not\in \Box P(B\times I_{z_1})$ \OR $0 \not\in \Box Q(B\times
  %  I_{z_1})$) \AND ($0 \not\in \Box P(B\times I_{z_2})$ \OR $0 \not\in \Box Q(B\times
  %  I_{z_2})$)}
  %\State \continue
  %\ELSIF{($0 \not\in \Box P(B\times I_{z_1})$ \OR $0 \not\in \Box Q(B\times
  %  I_{z_1})$) \OR ($0 \not\in \Box P(B\times I_{z_2})$ \OR $0 \not\in \Box Q(B\times
  %  I_{z_2})$)}
  %\State {Subdivide $B$ and insert its children in $L$, \continue}
  %\ELSIF{ $\threevec{0} \in \Box \threevec{t}(B\times I_{z_1})$ 
  %  \OR $\threevec{0} \in \Box \threevec{t}(B\times I_{z_2})$}\label{alg23}
  %\State {Subdivide $B$ and insert its children in $L$, \continue}
  \EndIf
  \Else
  \State {Subdivide $B$ and insert its children in $L$, \continue}
  \EndIf
  \Until{$L=\emptyset$}

  \State \Return \textbf{true} %Assumptions $(A_1)$, $(A_2)$ and $(A_3)$ are satisfied.
\end{algorithmic}
\end{algorithm}
\floatname{algorithm}{Algorithm}

\subsection{Numerical certified isolation}

There is no new result in this section, but for the reader's convenience, we
recall a classical numerical method to isolate regular solutions of a square system
within a given domain via recursive subdivision and show how it applies in our
case. Such a subdivision method is often called branch and bound method
\cite{kearfott96} and uses the Krawczyk operator or Kantorovich theorem to
certify existence and unicity of solutions.  We recall the properties of the
Krawczyk operator and propose the naive Algorithm~\ref{algo:subdiv-sing} for the
isolation of the singularities of a resultant using the characterization of
these points proved in Theorem~\ref{th:nodecusp}.
Note that even if the assumptions $(A_1)$ to $(A_4)$ are satisfied, this naive
algorithm may fail if a singularity lies on (or near) the boundary of a box
during the subdivision.  Indeed, for this algorithm to be certified, there is a
need to use $\varepsilon$-inflation of a box when using the Krawczyk test and
cluster neighboring boxes of the subdivision. For simplicity we do not detail
this issue and refer for instance to \cite[\S 5.9]{stahl1995},\cite{kearfott1997empirical,Neumaier2005}.

%\MP{je propose de virer ce qui suit, car ca n'apporte pas grand chose et les
%  tests montrent que c'est pas efficace.
%
%Remark also that this algorithm does not necessary need to be run from the
%global input box $B_0$. The idea is that it may be more efficient in practice to
%avoid the global subdivision from a large input box.  An heuristic alternative
%is to first compute numerical approximations of the singular points with a
%non-certified algorithm, for instance using homotopy.  Algorithm
%\ref{algo:subdiv-sing} can then be run on boxes enclosing these
%approximations. Finally, the certification may be recovered globally on the box
%$B_0$ if a post-processing is able to check that there is no solution in the
%complement of the boxes certified to contain a unique singularity.
%}

Let $F$ be a mapping from $\R^2$ to $\R^2$ and denote $J_F$ its Jacobian matrix.
The following lemma is a classical tool to certify existence and uniqueness of
regular solutions of the system $F=(0,0)$. For simplicity, we state the
following lemma on $\R^2$ but this result holds in any dimension.

\begin{Lemma}(Krawczyk \cite{1969Krawczyk}\cite[\S
  7]{Rump1983}) %\cite{Nbook90}
    \label{cri:regsol}
    Let $B$ be a box in $\mathbb{R}^2$, $(x_0,y_0)$ the center
    point of $B$ and $\Delta B = \left(\begin{smallmatrix} B_x-x_0\\B_y-y_0
    \end{smallmatrix}\right)$.  Let $N$ be the mapping:
      $$ N(x,y) = \left(\begin{smallmatrix} x\\ y\end{smallmatrix}\right)
                  - J_F(x_0,y_0)^{-1}\cdot F(x,y)$$
    and $K_F$ the Krawczyk operator defined by:
      $$ K_F(B) := N(x_0,y_0) + \Box J_N(B)\cdot \Delta B .$$
      If $ K_F(B)$ is contained in the interior of $B$ then $F=(0,0)$ has a
      unique solution in $B$.
\end{Lemma}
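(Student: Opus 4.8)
The plan is to read $N$ as a Newton--type operator: its fixed points in $B$ are exactly the zeros of $F$ in $B$ — since $N$ is built from the (by hypothesis well-defined) inverse $J_F(x_0,y_0)^{-1}$, one has $N(x)=x \iff J_F(x_0,y_0)^{-1}F(x)=0 \iff F(x)=0$ — and the inclusion $K_F(B)\subset\operatorname{int}(B)$ will be shown to imply, on the one hand, that $N$ maps $B$ into itself, and on the other hand that $N$ is a contraction of $B$. Existence and uniqueness of the zero then follow at once from the Banach fixed point theorem on the complete metric space $B$.

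The first step is the \emph{mean value inclusion}. Fix $x\in B$; since a box is convex, the segment joining $(x_0,y_0)$ to $x$ stays in $B$, so applying the one-dimensional mean value theorem to each coordinate of $N$ along this segment gives, componentwise, $N(x)-N(x_0,y_0)\in \Box J_N(B)\cdot\bigl(x-(x_0,y_0)\bigr)$, using that every intermediate point lies in $B$ and that $\Box J_N$ is a convergent interval extension of the Jacobian $J_N$. Since $x-(x_0,y_0)\in\Delta B$ and interval arithmetic is inclusion-monotone, this yields $N(x)\in N(x_0,y_0)+\Box J_N(B)\cdot\Delta B=K_F(B)$. Hence $N(B)\subset K_F(B)\subset\operatorname{int}(B)\subset B$.

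The second step turns this topological inclusion into a quantitative contraction estimate. Let $r>0$ be the vector of radii of $B$, so that $\Delta B=[-r,r]$, and let $m\ge 0$ be the matrix of magnitudes of the interval matrix $\Box J_N(B)$, i.e. $m_{ij}=\max\{\,|a|: a\in(\Box J_N(B))_{ij}\,\}$. Interval evaluation gives $\Box J_N(B)\cdot\Delta B=[-mr,\,mr]$, hence $K_F(B)=\bigl[\,N(x_0,y_0)-mr,\ N(x_0,y_0)+mr\,\bigr]$, and the \emph{strict} inclusion $K_F(B)\subset\operatorname{int}(B)$ forces $mr<r$ componentwise. Combining the mean value inclusion applied to two points $x,y\in B$ with this bound gives $|N(x)-N(y)|\le m\,|x-y|$ componentwise, hence $\|N(x)-N(y)\|_r\le\kappa\,\|x-y\|_r$ in the weighted sup-norm $\|v\|_r:=\max_i |v_i|/r_i$, where $\kappa:=\max_i (mr)_i/r_i<1$.

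With $N(B)\subset B$ and $N$ a $\kappa$-contraction of the complete space $(B,\|\cdot\|_r)$, the Banach fixed point theorem gives a unique fixed point $x^\star\in B$, which by the identification above is the unique zero of $F$ in $B$. I expect the main obstacle to be the mean value inclusion step: one must check that the coordinatewise mean value theorem is compatible with interval arithmetic (each coordinate's gradient is evaluated at some point of $B$, hence lies in the corresponding row of $\Box J_N(B)$), and that the passage from $K_F(B)\subset\operatorname{int}(B)$ to the algebraic condition $mr<r$ is precisely what makes $\Box J_N(B)$ contracting; once these are settled the rest is routine. As already noted before the statement, in an actual subdivision one replaces $J_F(x_0,y_0)^{-1}$ by an approximate inverse and uses $\varepsilon$-inflation, which only reinforces the hypotheses used here.
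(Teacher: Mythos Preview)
Your argument is correct and follows the standard route to this classical result: read $N$ as a Newton-type map whose fixed points coincide with the zeros of $F$, use the componentwise mean value theorem to obtain the enclosure $N(B)\subset K_F(B)$, extract from the strict inclusion $K_F(B)\subset\operatorname{int}(B)$ the componentwise inequality $mr<r$ on the magnitude matrix, and conclude by Banach's fixed point theorem in the weighted sup-norm. The only place requiring a moment's care is the passage from $K_F(B)\subset\operatorname{int}(B)$ to $mr<r$, since the center $N(x_0,y_0)$ of $K_F(B)$ need not equal the center $(x_0,y_0)$ of $B$; but adding the two one-sided inclusion inequalities indeed gives $(mr)_i<r_i$, as you implicitly use.

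There is nothing to compare against in the paper itself: the lemma is stated there as a classical tool with references to Krawczyk and Rump, and no proof is given. Your write-up is essentially the textbook proof one finds in those references (see e.g.\ Neumaier, \emph{Interval Methods for Systems of Equations}, or Rump's survey), so it is entirely appropriate here.
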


% Je propose de ne pas commenter l'algo classique de branch and bound: c'est en
% gros expliqué dans \cite{Neumaier2005} que avec Krawczyk qui cv
% quadratiquement et une evaluation à l'ordre 2 on va terminer.
% In addition, for a small enough box enclosing a regular solution, the previous
% criterion will eventually succeed to prove the unicity \MP{ref to thesis
%   \cite[Section 5.9, thm 5.9.3]{stahl1995}, pas clair il traite la terminaison
%   dans le cas ou il n'y a pas de solution, mais rien de bien clair quand il y en
%   a ??}. 
%In the next section, we study sufficient conditions for the system
%$s_{11}=s_{10}=0$ to have regular solutions.

\begin{algorithm}[t]
  \caption{Subdivision based isolation of singularities}
\label{algo:subdiv-sing}
\begin{algorithmic}[1]

  \Require{A box $B_0$ in $\R^2$ and two polynomials $P$ and $Q$ in $\Q[x,y,z]$
    such that the assumptions $(A_1)$, $(A_2)$, $(A_3)$ and $(A_4)$ are
    satisfied.
    % \MP{+ only nodes or ordinary cusps A4 or A5?}
  }

  \Ensure{A list $L_{Sing}$ of boxes such that each box isolates a singularity
    of the curve defined by $f=Res_z(P,Q)$, and each singularity in $B_0$ is in
    a box of $L_{Sing}$.}
 
  \State Let $f$ be the resultant and $s_{22},s_{11},s_{10}$ be the subresultant
  coefficients of $P$ and $Q$ wrt $z$. 
%  \GM{on peut utiliser la forward derivation}

  %\State 
  % Let $K_{(s_{11},s_{10})}$ be the Krawczyk operator of the mapping
  % $(s_{11},s_{10}): \R^2 \longrightarrow \R^2$.

  \State $L:=\{B_0\}$

  \Repeat
  \State $B:=L.pop$

  \If{$0 \not\in \Box f(B)$ or $0 \not\in \Box s_{11}(B)$ or $0 \not\in \Box
    s_{10}(B)$} 
  \State Discard $B$
  \Else 
 {  \If{$K_{(s_{11},s_{10})}(B) \subset int(B)$ and $0 \not\in \Box s_{22}(B)$ }
    \State Insert $B$ in $L_{Sing}$
    \Else \State {Subdivide $B$ and insert its children in $L$}
    \EndIf
  }
  \EndIf
  
  \Until{$L=\emptyset$}

  \State \Return $L_{Sing}$
\end{algorithmic}
\end{algorithm}

\paragraph{Termination of Algorithm~\ref{algo:subdiv-sing}.}

We assume that $P,Q$ satisfy the assumptions $(A_1),(A_2),(A_3)$ and $(A_4)$.
Since in this case the singularities of $f$ are either nodes or ordinary
cusp points, Theorem~\ref{th:nodecusp} implies that they are
regular solutions of the system $s_{11}=s_{10}=0$. This
implies that Algorithm~\ref{algo:subdiv-sing} will always
terminate.

%\section{Local topology at singularities}
\section{Number of real branches at singularities}
%\label{sec:topology}
\label{sec:branches}

Algorithm~\ref{algo:subdiv-sing} isolates singularities in boxes. The next step
is to identify the singularity type, node or ordinary cusp, and compute the
number of real branches of the curve connected to the singular point.

% Once a singularity is isolated in a box, we want to recover the local
% topology. More precisely, given a box containing a singularity $p$, we need to:
% \begin{enumerate}
% \def\labelenumi{\arabic{enumi}.}
% \itemsep1pt\parskip0pt\parsep0pt
% \item Compute the number of real branches connected to $p$.
% \item Reduce the size of the box until it contains no closed loop.
% \end{enumerate}
% In this section we focus on counting the number of real branches connected to
% $p$.

\subsection{Resultant}\label{sec:resultant}

For a resultant curve, recall that nodes are stable singularities whereas cusps
are not, thus a purely numerical method cannot distinguish between node and cusp
singularities. In particular, given a box $B$ containing a singularity, let $I$
be a box evaluation of the determinant of the Hessian. If $I$ does not vanish in
the considered box, it is a node, but if it contains $0$, it can still be a
node, but also a cusp.
%
% In this section, we show how to recover the topology around nodes. Our
% algorithm always returns a correct answer if it terminates, and
% guarantees in this case that all the singularities are nodes.
For a node, the local topology is easily deduced from the topological
degree of the mapping $(f_x, f_y)$.

\begin{Lemma}\cite[Theorem 4.15]{alberti-topo-cagd-08}
  Let $B$ be a box containing a singularity $p$ of $f$ such that
  $I := \Box \det(H)(B)\neq 0$, then if $I<0$ then $p$ is connected to $4$
  real branches, otherwise if $I>0$, then $p$ is an isolated real point.
\end{Lemma}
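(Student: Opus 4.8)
The plan is to reduce the statement to the local normal form of $f$ at $p$. First I would note that, since $p\in B$ and $\Box\det(H)$ is a convergent interval extension, the number $\det(H)(p)$ belongs to $I=\Box\det(H)(B)$; because $0\notin I$, this value is nonzero and has the same sign as $I$. In particular $p$ is a node: it is a singular point ($f(p)=f_x(p)=f_y(p)=0$) at which the symmetric Hessian matrix $H(p)$ is invertible, so its two eigenvalues $\lambda_1,\lambda_2$ are real and nonzero and satisfy $\lambda_1\lambda_2=\det(H)(p)$.

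Next I would invoke the Morse lemma: a nondegenerate critical point of the smooth function $f$ admits local coordinates $(u,v)$, obtained through a diffeomorphism $\varphi$ with $\varphi(0)=p$, in which $f\circ\varphi$ equals its quadratic part $\varepsilon_1u^2+\varepsilon_2v^2$, where $\varepsilon_i$ is the sign of $\lambda_i$. Consequently $\varphi$ restricts to a homeomorphism from a neighbourhood of $0$ in $\{\varepsilon_1u^2+\varepsilon_2v^2=0\}$ onto a neighbourhood of $p$ in $\{f=0\}$, sending $0$ to $p$, so the two curve germs have the same topology at these points.

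Then I would split into the two cases. If $I<0$, then $\lambda_1\lambda_2<0$, so $\varepsilon_1u^2+\varepsilon_2v^2$ is, up to a positive scalar, $u^2-v^2=(u-v)(u+v)$, whose real zero set near $0$ is the union of the two transverse lines $u=v$ and $u=-v$; transporting this by $\varphi$ shows that $\{f=0\}$ is locally the union of two smooth arcs crossing transversally at $p$, hence four half-branches issue from $p$ and $p$ is connected to $4$ real branches. If $I>0$, then $\lambda_1\lambda_2>0$ and the quadratic form $\varepsilon_1u^2+\varepsilon_2v^2$ is definite, so $f$ has a strict local extremum at $p$; therefore $\{f=0\}$, which contains $p$ since $f(p)=0$, meets a small enough neighbourhood of $p$ only at $p$, and $p$ is an isolated real point.

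The step I expect to be the main obstacle is the passage from the purely quadratic model to actual smooth branches of the curve. This is exactly what the Morse lemma supplies, but a more hands-on alternative in the case $I<0$ is to first rotate the coordinates so that the quadratic part of $f$ at $p$ becomes a multiple of $u^2-v^2$, and then apply the implicit function theorem along each of the two distinct tangent directions to produce the two arcs. Finally, to connect with the topological-degree formulation of \cite[Theorem 4.15]{alberti-topo-cagd-08}, one may instead note that the local degree $\deg_p(f_x,f_y)$ equals the sign of $\det(H)(p)$ (the Jacobian determinant of $(f_x,f_y)$ at $p$), hence the sign of $I$, and combine this with the classical identity expressing the number of half-branches of $\{f=0\}$ at $p$ as $2-2\,\deg_p(f_x,f_y)$, which again yields $4$ half-branches when $I<0$ and none when $I>0$.
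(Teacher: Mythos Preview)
Your argument is correct. Note, however, that the paper does not give its own proof of this lemma: it is quoted verbatim as \cite[Theorem~4.15]{alberti-topo-cagd-08} and used as a black box, so there is no in-paper proof to compare against. Your Morse-lemma reduction is the standard direct route and is entirely adequate here; the cited reference phrases the result in terms of the topological degree of the gradient map $(f_x,f_y)$, which is precisely the alternative you sketch in your last paragraph. Both approaches hinge on the same fact you isolate first, namely that $\det(H)(p)\in I$ forces $\det(H)(p)$ to be nonzero and of the sign of $I$, so that $p$ is a nondegenerate critical point of $f$ with $f(p)=0$. The Morse-lemma argument is slightly more self-contained for this specific conclusion (counting real half-branches), while the topological-degree formulation is what generalises to higher-order singularities and is the reason the paper cites that source.
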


Conversely, if $p$ is a node, then for a small enough box containing
$p$, the determinant of the Hessian does not contain $0$ and the number
of branches connected to $p$ can be recovered.
Thus, when $B$ contains a node singularity of the resultant,
Semi-algorithm~\ref{algo:nb-branches} will always terminate and compute the number
of real branches connected to $p$. Note that in the case when the singularity is an ordinary cusp, Semi-algorithm~\ref{algo:nb-branches} will not terminate.

\floatname{algorithm}{Semi-algorithm}
\begin{algorithm}%[t]
  \caption{Number of branches at a resultant singularity}
\label{algo:nb-branches}
\begin{algorithmic}[1]
  \Require{A box $B$ in $\R^2$ output by Algorithm~\ref{algo:subdiv-sing}
    containing a unique singular point $p$.}

  \Ensure{The number of branches connected to $p$.}
 
  \State Let $f$ be the resultant and $s_{11},s_{10}$ be the subresultant
  coefficients of $P$ and $Q$ wrt $z$. 

  % Let $K_{(s_{11},s_{10})}$ be the Krawczyk operator of the mapping
  % $(s_{11},s_{10}): \R^2 \longrightarrow \R^2$.

\While{$0\in \Box\det(\text{Hessian}(f))(B)$} 
\State $B:=B\cap K_{(s_{11},s_{10})}(B)$
\EndWhile

\If{ $\Box\det(\text{Hessian}(f))(B)>0$  }
\Return 0
\Else \,
\Return 4
\EndIf
\end{algorithmic}
\end{algorithm}
\floatname{algorithm}{Algorithm}

\subsection{Discriminant}\label{sec:discriminant}

In this section we focus on a discriminant curve.  Let $f$ be the resultant of
$P$ and $Q:=P_z$ satisfying the assumptions $(A_1),(A_2),(A_3)$ and $(A_4)$. Note that
$Res_z(P,P_z) = LT_z(P)Disc_z(P)$, assumption $(A_3)$ implies that the leading
coefficient of $P$ in $z$ is constant, such that the curve defined by $f$ is the
same as the one defined by the discriminant of $P$.

As for the resultant, the singularities of the curve $f=0$ are either nodes or ordinary cusps. Furthermore, for the discriminant curve, the ordinary cusps are stable and we can identify them numerically.
%we can detect not only nodes but also singularities $p=(\alpha,\beta)$ of higher
%multiplicity if they are the projection of a triple root of the polynomial
%$P(\alpha,\beta,z)$. Notably, in the generic case, these are the only kind of
%singularities of the discriminant.
Node singularities can be detected and their local topology computed with the
same algorithm as in the previous section for the resultant. We will now focus
on the case where the singular point is an ordinary cusp.
First we show that above an ordinary cusp, the polynomial $P$ has a triple root in $z$.

%In this
%case the determinant of the Hessian is $0$ and Lemma \ref{lem:cusp} shows
%that the singularity is an ordinary cusp.

\begin{Lemma}
    Under the assumptions $(A_1), (A_2), (A_3), (A_4)$ the point $p=(\alpha, \beta)$ is an ordinary cusp of the discriminant curve $f=0$ if and only if $P(\alpha,\beta,z)$ has a triple root in $z$.
\end{Lemma}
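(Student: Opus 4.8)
The plan is to deduce this from Theorem~\ref{th:nodecusp} applied to $P$ and $Q:=P_z$: under $(A_1)$, $(A_2)$ and $(A_3)$, a singular point $p$ of $f=0$ is an ordinary cusp if and only if the space curve $\CC=\V(P,P_z)$ has a vertical tangent above $p$. So everything reduces to translating ``$\CC$ has a vertical tangent above $p$'' into a statement about the fibre $P(\alpha,\beta,z)$, and I would prove the two implications separately.

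For ``$P(\alpha,\beta,z)$ has a triple root $\gamma$ $\implies$ $p$ is an ordinary cusp'', I would first note that $\gamma$ is then a double root of $P_z(\alpha,\beta,z)$, so $(z-\gamma)^2$ divides $g(z):=\gcd(P(\alpha,\beta,z),P_z(\alpha,\beta,z))$; since $\deg g\le 2$ by $(A_2)$, this forces $\deg g=2$ and, in particular, $P(\alpha,\beta,z)$ has no other multiple root. I would then reuse the double-root case of the proof of Theorem~\ref{th:sing-sres} verbatim --- the corollary of the gap structure theorem, applicable at $(\alpha,\beta)$ because $(A_3)$ keeps the leading coefficients of $P$ and $P_z$ in $z$ from dropping there --- so that $\deg g=2$ yields $s_{22}(p)\neq 0$ and $f(p)=s_{11}(p)=s_{10}(p)=0$; hence $p\in S_{sres}\subset S_{sing}$ is a singular point of $f$, so a node or an ordinary cusp by $(A_4)$. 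Finally, at $q:=(\alpha,\beta,\gamma)\in\CC$ one has $P_z(q)=P_{zz}(q)=0$, so $\nabla P(q)$ and $\nabla Q(q)=\nabla P_z(q)$ both have vanishing $z$-component; therefore the tangent $\threevec{t}(q)=\nabla P(q)\times\nabla Q(q)$ is parallel to the $z$-axis, and it is nonzero by $(A_1)$. Thus $\CC$ has a vertical tangent above $p$, and the last assertion of Theorem~\ref{th:nodecusp} gives that $p$ is an ordinary cusp.

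For the converse, starting from an ordinary cusp $p$, Theorem~\ref{th:nodecusp} provides a point $q=(\alpha,\beta,\gamma)$ of $\CC$ above $p$ at which $\threevec{t}(q)=\nabla P(q)\times\nabla Q(q)$ is nonzero and parallel to the $z$-axis; being orthogonal to $\nabla P(q)$ and to $\nabla Q(q)$, it forces their $z$-components $P_z(q)$ and $P_{zz}(q)$ to vanish. Together with $P(q)=0$ (since $q\in\CC$), this makes $\gamma$ a root of $P(\alpha,\beta,z)$ of some multiplicity $m\ge 3$; and since then $(z-\gamma)^{m-1}$ divides $\gcd(P(\alpha,\beta,z),P_z(\alpha,\beta,z))$, whose degree is $\le 2$ by $(A_2)$, I conclude $m=3$.

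I expect the one delicate point to be the specialization of the subresultant--gcd dictionary at $(\alpha,\beta)$ through the gap structure theorem, which relies on $(A_3)$; but this bookkeeping is already carried out in the proof of Theorem~\ref{th:sing-sres} and can simply be cited. The genuinely new ingredient --- that a vertical tangent of $\CC$ above $p$ is exactly $P_{zz}$ vanishing at the common root of $P(\alpha,\beta,z)$ and $P_z(\alpha,\beta,z)$, i.e.\ a triple root of $P(\alpha,\beta,z)$ --- is the short cross-product computation above.
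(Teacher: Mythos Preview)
Your proposal is correct and follows essentially the same route as the paper: both reduce the statement to the last assertion of Theorem~\ref{th:nodecusp} (ordinary cusp $\iff$ vertical tangent of $\CC$ above $p$) and then translate ``vertical tangent'' into $P_z(q)=P_{zz}(q)=0$ at a point $q\in\CC$, using $(A_2)$ to cap the multiplicity at~$3$. Your write-up is simply more explicit than the paper's---in particular, you spell out why $p$ is singular (so that the ``furthermore'' clause of Theorem~\ref{th:nodecusp} applies) and you justify the cross-product computation, whereas the paper compresses this into one sentence.
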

\begin{proof}
    Under our assumptions, Theorem~\ref{th:nodecusp} states that $p=(\alpha,\beta)$ is an ordinary cusp of the discriminant curve $f=0$ if and only if the curve $\mathcal{C}_{P\cap P_z}$ has a vertical tangent above $p$. This is the case if and only if there exists $\gamma$ such that $P_z(\alpha,\beta,\gamma) = P_{zz}(\alpha,\beta,\gamma) = 0$. Moreover, $(A_2)$ implies that $P_{zzz}(\alpha,\beta,\gamma)\neq 0$, such that $\gamma$ is a triple root of $P(\alpha,\beta,z)$.
\end{proof}

It is thus desirable to identify cusps via triple points, the following lemma
states the regularity of these points which is a necessary condition to use
numerical methods for their isolation.
%the projection of a
%triple point of $P(\alpha,\beta,z)$. 

%First we show how we can certify that a box $B$ contains the projection of a
%triple root of $P$.

%\subsubsection{Number of real branches}\label{number-of-real-branches-1}

% In the following, we focus on the case where the considered box $B$ 
% contains an ordinary cusp that is the projection of a triple point of $P$,
% and we show how to compute its local topology.

%and how to guarantee that $B$ contains no closed loop.

\begin{Lemma}
  If $P$ has a triple point, and the curve $P=P_z=0$ is smooth then the
  point is a regular solution of $P=P_z=P_{zz}=0$.
\end{Lemma}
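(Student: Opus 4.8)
The plan is to write down the Jacobian of the square system $P=P_z=P_{zz}=0$ at the triple point and show its determinant does not vanish, by exploiting the vanishing of the $z$-derivatives there. Write $q=(\alpha,\beta,\gamma)$, where $\gamma$ is the triple root of $P(\alpha,\beta,z)$. First I would record the vanishing pattern at $q$: since $\gamma$ is a root of multiplicity exactly three, $P(q)=P_z(q)=P_{zz}(q)=0$ while $P_{zzz}(q)\neq 0$ --- the latter because ``triple'' means multiplicity precisely three (equivalently it follows from $(A_2)$, which prevents $\gcd(P(\alpha,\beta,z),P_z(\alpha,\beta,z))$ from having degree $\geq 3$). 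In particular $q$ lies on the curve $P=P_z=0$, so the smoothness hypothesis applies at $q$.

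Next I would form the $3\times 3$ Jacobian of $(P,P_z,P_{zz})$ at $q$, whose rows are $\nabla P(q)$, $\nabla P_z(q)$, $\nabla P_{zz}(q)$. Since the $z$-components of the first two rows are $P_z(q)=0$ and $P_{zz}(q)=0$, its third column is $(0,0,P_{zzz}(q))^{t}$; expanding the determinant along that column gives $P_{zzz}(q)\cdot\det\twomatrix{P_x}{P_y}{P_{zx}}{P_{zy}}(q)$, where $P_{zx}$ and $P_{zy}$ denote the $x$- and $y$-derivatives of $P_z$.

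Then I would invoke the smoothness hypothesis on the space curve $P=P_z=0$: at every point of this curve, in particular at $q$, the $2\times 3$ Jacobian with rows $\nabla P$ and $\nabla P_z$ has rank two. At $q$ the third column of this $2\times 3$ matrix is $(P_z(q),P_{zz}(q))^{t}=(0,0)^{t}$, so rank two forces the left $2\times 2$ minor $\det\twomatrix{P_x}{P_y}{P_{zx}}{P_{zy}}(q)$ to be nonzero. Combined with $P_{zzz}(q)\neq 0$, the $3\times 3$ determinant above is nonzero, hence $q$ is a regular solution of $P=P_z=P_{zz}=0$.

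There is no genuine difficulty here: the argument is a one-line cofactor expansion together with the observation that the vanishing third column at $q$ is precisely what lets the rank-two condition for $P=P_z=0$ supply the $2\times 2$ non-degeneracy we need. The only points deserving care are stating explicitly that ``triple'' excludes multiplicity four or more (so that $P_{zzz}(q)\neq 0$), and keeping the bookkeeping of the cofactor expansion clean.
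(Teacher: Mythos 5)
Your proof is correct and follows essentially the same route as the paper: cofactor expansion of the $3\times3$ Jacobian along the $z$-column, which reduces to $(0,0,P_{zzz}(q))^t$ since $P_z(q)=P_{zz}(q)=0$, giving $P_{zzz}(q)$ times the $2\times2$ minor in $x,y$, and that minor is nonzero because smoothness of $P=P_z=0$ forces rank two while the $z$-column of its Jacobian vanishes at $q$. Your write-up is in fact slightly more explicit than the paper's about why $P_{zzz}(q)\neq0$ and why the rank-two condition localizes to the left $2\times2$ minor.
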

\begin{proof}
  At the triple point $q$, the Jacobian of the system $P=P_z=P_{zz}=0$ is
  $P_{zzz}(q)\left|\begin{smallmatrix} P_x(q) & P_{xz}(q)\\ P_y(q) &
  P_{yz}(q)\end{smallmatrix}\right|$.  By assumption, $P_{zzz}(q)\neq 0$.
  Moreover, since the curve $P=P_z=0$ is regular, at least one minor of its
  jacobian matrix is not zero. Since $P_z(q)=0$ and $P_{zz}(q)=0$, this means
  that $\left|\begin{smallmatrix} P_x(q) & P_{xz}(q)\\ P_y(q) &
  P_{yz}(q)\end{smallmatrix}\right|\neq 0$.  Thus the Jacobian is not zero and
  $q$ is regular.
\end{proof}

The following more effective version of this Lemma delimits the box containing the triple root.

\begin{Lemma}[triple points]
    \label{lem:triplepoints}
    Let $B$ be a box containing a unique singular point $p$ of $f$ and assume that
  $0\notin\Box s_{22}$.
  The polynomial $P$ has a triple point in $z$ above $p$ if and only if the system $P=P_z=P_{zz}=0$ has a regular solution in the box $B\times I_z$ where $I_z$ is the interval $\frac{-\Box s_{21}}{2\Box s_{22}}$.
\end{Lemma}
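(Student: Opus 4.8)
The plan is to connect the three notions in play: the singularity $p=(\alpha,\beta)$ of $f$, the triple root of $P(\alpha,\beta,z)$, and a regular solution of $P=P_z=P_{zz}=0$ inside the prescribed box $B\times I_z$. For the forward direction, suppose $P$ has a triple point in $z$ above $p$, say at $z=\gamma$. By the previous lemma (the one characterizing cusps via triple roots) $p$ is an ordinary cusp of $f$, and by the lemma just above (regularity of triple points, using that $P=P_z=0$ is smooth by $(A_1)$) the point $q=(\alpha,\beta,\gamma)$ is a regular solution of $P=P_z=P_{zz}=0$. It remains to show that $\gamma$ actually lies in the interval $I_z = -\Box s_{21}/(2\Box s_{22})$. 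Here I would argue as follows: above an ordinary cusp $\CC$ has a vertical tangent, so by Theorem~\ref{th:nodecusp} (and the argument in its proof) one has $s_{11}(p)=s_{10}(p)=s_{21}(p)=0$ once we translate so that $\gamma=0$; more to the point, the gap-structure / subresultant identities give $S_2(\alpha,\beta,z) = s_{22}(p)z^2 + s_{21}(p)z + s_{20}(p)$ and, since $S_2(\alpha,\beta,z)$ is (up to a constant) $\gcd(P(\alpha,\beta,z),P_z(\alpha,\beta,z))$ which has $\gamma$ as a double root, we get $\gamma = -s_{21}(p)/(2s_{22}(p))$, the vertex of the parabola. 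Since $p\in B$, convergence of the interval extensions yields $\gamma \in -\Box s_{21}/(2\Box s_{22}) = I_z$, hence $q\in B\times I_z$.

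For the converse, suppose the system $P=P_z=P_{zz}=0$ has a regular solution $q=(\alpha',\beta',\gamma')$ in $B\times I_z$. The first two equations say $q\in\CC$ and the third says the tangent to $\CC$ at $q$ is vertical (its $z$-component vanishes), so by $(A_2)$ — which forces $P_{zzz}(q)\neq 0$ — the point $\gamma'$ is a genuine triple root of $P(\alpha',\beta',z)$. Then $(\alpha',\beta')$ is an ordinary cusp of $f$ by the triple-root lemma, hence in particular a singular point of $f$ lying in $B$. But $B$ contains a \emph{unique} singular point $p$ of $f$, so $(\alpha',\beta')=p$, and $P$ has a triple point in $z$ above $p$. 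This closes the equivalence.

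The main obstacle I anticipate is the bookkeeping needed to justify the precise location of the triple root inside $I_z=-\Box s_{21}/(2\Box s_{22})$, i.e. the identity $\gamma = -s_{21}(p)/(2 s_{22}(p))$. This requires: (i) that $S_2$ evaluated at $p$ is, up to a nonzero scalar, the gcd $\gcd(P(\alpha,\beta,z),P_z(\alpha,\beta,z))$, which follows from the gap-structure theorem together with $(A_2)$–$(A_3)$ (so that the degree-$2$ subresultant is the last nonvanishing one and $s_{22}(p)\neq 0$); and (ii) that this quadratic has a double root, which is exactly the vertical-tangent condition for a cusp, equivalently $\det A(p)=4s_{22}(p)s_{20}(p)-s_{21}(p)^2=0$ as established in the proof of Theorem~\ref{th:nodecusp}. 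Once $S_2(\alpha,\beta,z)=s_{22}(p)(z-\gamma)^2$, matching the $z^1$-coefficient gives $-2s_{22}(p)\gamma = s_{21}(p)$, i.e. $\gamma = -s_{21}(p)/(2 s_{22}(p))$, and the convergence property of $\Box$ does the rest. A minor point to handle cleanly is that throughout we may and should normalize by translating $z$ so that the triple root sits at $0$ when invoking the computations from the proof of Theorem~\ref{th:nodecusp}; since the statement is about membership in $B\times I_z$ with $I_z$ built from subresultant coefficients that transform compatibly under this translation, no generality is lost.
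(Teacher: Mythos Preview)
Your proof is correct and follows the same approach as the paper's: the forward direction locates the triple root as the double root (vertex) of the quadratic $S_2(\alpha,\beta,z)$, yielding $\gamma=-s_{21}(p)/(2s_{22}(p))\in I_z$, and the converse uses uniqueness of the singular point in $B$ to force the projection of any solution of $P=P_z=P_{zz}=0$ to be $p$. Your write-up is more detailed than the paper's rather terse argument; the detour through translating $z$ to $0$ and quoting the proof of Theorem~\ref{th:nodecusp} is unnecessary once you have the vertex-of-parabola identification, and the parenthetical about the tangent's ``$z$-component vanishing'' is a slip (for a vertical tangent it is the $x$- and $y$-components that vanish), but neither affects the logic.
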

\begin{proof}
  If $P(\alpha,\beta,z)$ has a triple root $z_0$ for $(\alpha,\beta)\in B$,
  then it has a multiplicity $2$ in
  $\gcd(P(\alpha,\beta,z),$ $P_z(\alpha,\beta,z))$. In particular $z_0$ is a double
  root of the second polynomial subresultant $S_2 = s_{22}z^2+s_{21}z+s_{20}$,
  and $z_0 = -\frac{s_{21}(\alpha,\beta)}{2s_{22}(\alpha,\beta)}\subset I_z$.
  Thus if $(\alpha,\beta)$ is the projection of a triple point of $P$, then this
  point is necessarily in the box $B\times I_z$.  Finally if the system $P=P_z=P_{zz}=0$ has a regular solution in $B\times I_z$, then we can
  conclude that the $3d$ box contains a triple point of $P$ and that its projection is $p$.
\end{proof}

An ordinary cusp is connected to exactly $2$ real branches. Using
Lemma~\ref{lem:triplepoints}, Algorithm~\ref{algo:nb-branches-discrim}
classifies the singularities between nodes and ordinary cusps, and compute the
number of real branches connected to them. It always terminates since the
diameter of the box converges toward $0$ such that eventually either
$\det(\text{Hessian}(f))(B)\neq 0$ or
$K_{(P,P_z,P_{zz})}(B\times I_z) \subset int(B\times I_z)$.

\begin{algorithm}%[t]
  \caption{Number of branches at a discriminant singularity}
\label{algo:nb-branches-discrim}
\begin{algorithmic}[1]
  \Require{A box $B$ in $\R^2$ output by Algorithm~\ref{algo:subdiv-sing}
    containing a unique singular point $p$.}

  \Ensure{The number of branches connected to $p$ and its singularity type (node
    or ordinary cusp).}
 
  \State Let $f$ be the resultant and $s_{2,2},s_{2,1},s_{11},s_{10}$ be the
  subresultant coefficients of $P$ and $P_z$ wrt $z$.
 % Let $K_F$ be the Krawczyk operator of the mapping $F$.
    % Let $K_{(s_{11},s_{10})}$ be the Krawczyk operator of the mapping
  % $(s_{11},s_{10}): \R^2 \longrightarrow \R^2$.

\While{true}

\If{ $\Box\det(\text{Hessian}(f))(B)>0$  }
\Return (0, node)
\EndIf
\If{ $\Box\det(\text{Hessian}(f))(B)<0$  }
\Return (4, node)
\EndIf
\State $I_z:=-\frac{\Box s_{21}(B)}{2\Box s_{22}(B)}$
\If{ $K_{(P,P_z,P_{zz})}(B\times I_z) \subset int(B\times I_z)$  }
\Return (2, ordinary cusp)
\EndIf

\State $B:=B\cap K_{(s_{11},s_{10})}(B)$
\EndWhile
\end{algorithmic}
\end{algorithm}

\section{Loop detection near singularities}
\label{sec:loopdetection}

Now that we know the number of branches $n_p$ connected to a singularity $p$, we
need to ensure that the enclosing box $B$ computed so far does not contain any
other branches not connected to $p$. First we can refine $B$ until the number of
branches crossing the boundary of $B$ matches $n_p$. But this is not enough,
since $B$ could contain closed loops of $f$. This case can be discarded by
ensuring that $B$ contains a unique solution of the system $f_x=f_y=0$.

\subsection{Resultant}

%\subsubsection{Loop detection}\label{loop-detection}

In the case of nodes, $p$ is a regular solution of the system $f_x=f_y=0$
since the determinant of the Jacobian of this system is the determinant
of the Hessian of $f$ and is not zero at $p$. Thus we can use standard
tools from interval analysis to guarantee that $p$ is the only root in $B$ of
the system $f_x=f_y=0$.

\begin{Lemma}[Node near loops]
    \label{lem:iso-node}
  Let $K_{f_x,f_y}$ be the Krawczyk operator defined in Lemma \ref{cri:regsol}
  with respect to the system $f_x=f_y=0$, and $B$ be a box containing a
  node $p$ of $f$. If $K_{f_x,f_y}(B)\subset int(B)$ then $B$ contains no
  closed loop of $f$.
\end{Lemma}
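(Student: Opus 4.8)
The plan is to argue by contradiction: a closed loop of $\mathcal{C}=\{f=0\}$ inside $B$ will be shown to force a second solution of the square system $f_x=f_y=0$ in $B$, which is impossible once the Krawczyk test succeeds. Throughout I would use that $f$ is square-free, hence does not vanish on any open subset of $\R^2$.

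First I would set up the topology. Suppose $B$ contains a closed loop of $\mathcal{C}$, that is, a subset $\gamma\subset\{f=0\}\cap B$ homeomorphic to a circle (from a more complicated compact configuration one extracts such a simple loop). By the Jordan curve theorem $\gamma$ bounds a bounded open region $D$; since $B$ is a box, hence convex, and the filled region of a Jordan curve is contained in its convex hull, we get $\overline{D}\subset B$. The polynomial $f$ is continuous on the compact set $\overline{D}$ and vanishes on $\partial D=\gamma$.

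Next I would produce an interior critical point of $f$. Because $f$ is square-free it is not identically zero on the open set $D$, so either $\sup_D f>0$ or $\inf_D f<0$. In the first case $f$ attains its maximum over $\overline{D}$ at some $q$ with $f(q)>0$; since $f$ vanishes on $\partial D$, the point $q$ is interior to $D$ and is a local extremum, hence $\nabla f(q)=0$, i.e. $f_x(q)=f_y(q)=0$. The second case is symmetric, using the minimum. Either way one obtains a point $q\in D\subset B$ with $f_x(q)=f_y(q)=0$ and $f(q)\neq 0$.

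Finally I would conclude. Since $p$ is a node it is a singular point of $f$, so $f_x(p)=f_y(p)=0$ while $f(p)=0$; comparing with $f(q)\neq 0$ yields $q\neq p$. Thus $p$ and $q$ are two distinct solutions of $f_x=f_y=0$ inside $B$. But $K_{f_x,f_y}(B)\subset int(B)$ together with Lemma~\ref{cri:regsol} forces this system to have a \emph{unique} solution in $B$, a contradiction. Hence $B$ contains no closed loop of $f$. The only delicate ingredient is the topological step (the Jordan curve theorem and the inclusion $\overline{D}\subset B$); once it is in place, the maximum-principle argument and the uniqueness coming from the Krawczyk operator are routine.
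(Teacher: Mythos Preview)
Your proof is correct and follows essentially the same route as the paper: a closed loop bounds a region on which $f$ attains a nonzero extremum at an interior point $q$, giving a second solution of $f_x=f_y=0$ in $B$, which contradicts the uniqueness guaranteed by the Krawczyk test. You supply more detail on the topological step (Jordan curve theorem, convexity of $B$ to get $\overline{D}\subset B$, square-freeness to ensure $f\not\equiv 0$ on $D$) than the paper does, but the argument is the same.
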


\begin{proof}
  Lemma \ref{cri:regsol} ensures that $p$ is the only solution of $f_x=f_y=0$ in
  $B$. If $B$ contains a closed loop included in $int(B)$, then a connected
  subset of $B$ has its boundary included in the curve defined by $f$. Thus it
  contains a point $q$ where $f$ reaches a local extrema and such that
  $f(q)\neq 0$. In particular, $f_x(q)=f_y(q)=0$ and $q\neq p$, hence the
  contradiction.
\end{proof}

\begin{Remark}
  \label{rem:node-loop}
  Alternatively, using tools from the next section, denoting by $\Box f$ an
  evaluation of $f$ on the box $B$, we let $I := \Box f_{xx}\Box f_{yy} -\Box
  f_{xy}\Box f_{xy} $. Then we claim that if $I$ does not contain $0$ then $B$
  contains at most 1 solution of the system $f_x=f_y=0$.
\end{Remark}

\subsection{Discriminant}
%\subsubsection{Loop detection}\label{loop-detection-1}

For the discriminant, the loops near the nodes can be handled as for the resultant. However, the same approach cannot handle ordinary cusps. The problem is that ordinary cusps are singular solutions of the system $f_x=f_y=0$. We need the following Lemma to handle ordinary cusps.
%Finally, once we know that a box contains an ordinary cusp, we want to
%ensure that it does not contain any closed loop. This is ensured by the
%following lemma.

\begin{Lemma}[Ordinary cusp near loops]
\label{lem:iso-cusp}
  Let $p$ be an ordinary cusp point of $f$ in a box $B$. Let $J, K, L, M$
  be the intervals:
% \begin{nolinenumbers}  
  \begin{dgroup*}
      \begin{dmath*}
      J = \Box f_{yy}
      \end{dmath*}
      \begin{dmath*}
      K = \Box f_{yy}^2\Box f_{xxx}-3\Box f_{yy}\Box f_{xy} 
          \Box f_{xxy}+3\Box f_{xy}^2\Box f_{xyy} -\Box f_{xy}\Box 
          f_{xx}\Box f_{yyy}
      \end{dmath*}
      \begin{dmath*}
      L = \Box f_{yy}\Box f_{xxy} + \Box f_{xx}\Box f_{yyy} - 2 \Box
          f_{xy}\Box f_{xyy}
      \end{dmath*}
      \begin{dmath*}
      M = \Box f_{yy}\Box f_{xy} - \Box f_{xy}\Box f_{yy}
      \end{dmath*}
  \end{dgroup*}
% \end{nolinenumbers}
  and let $J', K', L', M'$ be the intervals obtained by the same formula with
  $x$ and $y$ swapped. If $I = J(JK-LM)$ or $I' = J'(J'K'-L'M')$ 
  do not contain $0$, then $B$ does not contain any closed loop of the curve
  defined by $f$. \label{lem:cuspnoloop}
\end{Lemma}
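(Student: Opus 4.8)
The plan is to mimic the argument of Lemma~\ref{lem:iso-node}. A closed loop of $f$ contained in $int(B)$ bounds a planar region $R\subset B$ (its closure lies in $\mathrm{conv}(\partial R)\subset B$ since $B$ is convex), on the boundary of which $f$ vanishes; as $f$ is square-free it does not vanish identically on $R$, so it attains a nonzero extremum at an interior point $q\in int(R)$, where $f_x(q)=f_y(q)=0$ and $f(q)\neq0$, hence $q\neq p$. So it suffices to prove that $p$ is the \emph{only} solution of $f_x=f_y=0$ in $B$. The difficulty, compared to the node case, is that here $p$ is a \emph{singular} solution of this system, so Lemma~\ref{cri:regsol} does not apply; instead I would show that $p$ is a solution of local multiplicity exactly two and that the hypothesis rules out any further solution.

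For the geometric setup: since $p$ is an ordinary cusp, $\det(\text{Hessian}(f))(p)=f_{xy}^2-f_{xx}f_{yy}=0$ and $\text{Hessian}(f)(p)\neq0$ — otherwise $f$ would have multiplicity $\geq3$ at $p$ and, by the cusp condition, the cubic part of $f$ at $p$ would be a nonzero binary cubic form with no nontrivial real zero, which is impossible. So $\text{Hessian}(f)(p)$ has rank exactly $1$ and $f_{xx}(p)\neq0$ or $f_{yy}(p)\neq0$ — which is why the statement offers the alternative between $I$ and $I'$. Assume $I=J(JK-LM)\not\ni0$; then $0\notin J=\Box f_{yy}(B)$, so $f_{yy}$ keeps a constant nonzero sign on $B$ and, by the implicit function theorem, $\{f_y=0\}$ is near $p$ the graph of an analytic $y=\phi(x)$ with $\phi(\alpha)=\beta$, on which the solutions of $f_x=f_y=0$ are exactly the zeros of $g(x):=f_x(x,\phi(x))$. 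From $f_y(x,\phi(x))\equiv0$ one gets $\phi'=-f_{xy}/f_{yy}$, hence along the arc $g'=f_{xx}+f_{xy}\phi'=(f_{xx}f_{yy}-f_{xy}^2)/f_{yy}=-\det(\text{Hessian}(f))/f_{yy}$; in particular $g(\alpha)=f_x(p)=0$ and, since $\det(\text{Hessian}(f))(p)=0$, also $g'(\alpha)=0$.

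The crux is to prove $g''(\alpha)\neq0$. The tangent to $\{f_y=0\}$ at $p$ is directed by $\tau_0:=(f_{yy}(p),-f_{xy}(p))$, and since $\det(\text{Hessian}(f))(p)=0$ one checks that $\text{Hessian}(f)(p)\,\tau_0=0$, so $\tau_0$ spans $\ker\text{Hessian}(f)(p)$. By the definition of an ordinary cusp, $t\mapsto f(p+t\tau_0)$ vanishes at $t=0$ with multiplicity at most $3$, and it vanishes there with multiplicity at least $3$ because $f(p)=0$, $\nabla f(p)=0$ and $\tau_0^{\,T}\text{Hessian}(f)(p)\,\tau_0=0$; hence $D^3_{\tau_0}f(p):=\sum_{i+j=3}\binom{3}{i}f_{x^iy^j}(p)\,f_{yy}(p)^i(-f_{xy}(p))^j\neq0$. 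Differentiating $g'=-\det(\text{Hessian}(f))/f_{yy}$ once more along the arc and substituting $\phi'=-f_{xy}/f_{yy}$ gives, at every point of the arc, the identity
\[
 f_{yy}^3\,g''\;=\;f_{yy}^3 f_{xxx}-3f_{yy}^2 f_{xy}f_{xxy}+3f_{yy}f_{xy}^2 f_{xyy}-f_{xy}^3 f_{yyy}\;=\;D^3_{(f_{yy},-f_{xy})}f,
\]
which at $x=\alpha$ reads $f_{yy}(p)^3 g''(\alpha)=D^3_{\tau_0}f(p)\neq0$. The right-hand side is, up to the sign-irrelevant factor $J=\Box f_{yy}$, the quantity whose interval evaluation is $I$; since $0\notin J$, the hypothesis $I\not\ni0$ thus certifies that $g''$ vanishes nowhere on the arc (and $I'\not\ni0$ does the same after exchanging the roles of $x$ and $y$).

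It then remains to exclude a second zero of $g$ by two applications of Rolle's theorem: if $g'\equiv0$ then $g\equiv g(\alpha)=0$, contradicting $g''(\alpha)\neq0$; and if $g$ vanished at some $x_1\neq\alpha$ on the arc, Rolle applied to $g$ on $[\alpha,x_1]$ gives $x_2\in(\alpha,x_1)$ with $g'(x_2)=0$, then Rolle applied to $g'$ on $[\alpha,x_2]$ gives $x_3$ with $g''(x_3)=0$, contradicting $g''\neq0$ on the arc. Hence $p$ is the only solution of $f_x=f_y=0$ on this arc, so (once $B$ is small enough that $\{f_y=0\}\cap B$ reduces to a single arc — which one may arrange via the refinement step preceding the test) the only solution in $B$, and $B$ contains no closed loop of $f$. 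The hard part is the crux step $g''(\alpha)\neq0$: it is the only place where the ordinary-cusp hypothesis enters essentially, and amounts to the classical fact that an $A_2$ singularity is a double point of its own gradient system. The rest — the implicit function theorem, the displayed differentiation identity together with its matching to the explicit intervals $J,K,L,M$, the two uses of Rolle, and the reduction to a connected arc of $\{f_y=0\}\cap B$ — is routine bookkeeping.
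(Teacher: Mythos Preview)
Your reduction (a closed loop forces a second critical point of $f$ in $B$) and your identity
\[
f_{yy}^{\,3}\,g'' \;=\; f_{yy}^{\,3} f_{xxx}-3f_{yy}^{\,2} f_{xy}f_{xxy}+3f_{yy}f_{xy}^{\,2} f_{xyy}-f_{xy}^{\,3} f_{yyy}
\]
along the arc $y=\phi(x)$ are both correct, and the double Rolle argument would indeed give uniqueness provided $g''\neq 0$ on the whole arc in $B$. The gap is the sentence ``the hypothesis $I\not\ni 0$ thus certifies that $g''$ vanishes nowhere on the arc''. Compare the last term of your $f_{yy}^{\,3}g''$, namely $-f_{xy}^{\,3}f_{yyy}$, with the last term produced in $JK$, namely $-\Box f_{yy}\,\Box f_{xy}\,\Box f_{xx}\,\Box f_{yyy}$. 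As point expressions these coincide only where $f_{xy}^{\,2}=f_{xx}f_{yy}$, i.e.\ at the cusp $p$ itself; elsewhere on the arc they differ by $(f_{xx}f_{yy}-f_{xy}^{\,2})\,f_{xy}f_{yyy}$, and the term $LM$ (identically zero as a point expression, since $M$ reads $f_{yy}f_{xy}-f_{xy}f_{yy}$) does not account for this discrepancy. So $I=J(JK-LM)$ is \emph{not} an interval enclosure of $f_{yy}^{\,3}g''$ (nor of $f_{yy}^{\,4}g''$) on $B$, and $0\notin I$ does not imply $g''\neq 0$ on the arc. Your analytic route would prove a lemma with a \emph{different} interval criterion (the natural box evaluation of $f_{yy}^{\,3}g''$), but not the lemma as stated.

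The paper's argument is algebraic rather than analytic. After the linear change $\partial_X=f_{yy}(p)\partial_x-f_{xy}(p)\partial_y$, $\partial_Y=\partial_y$, one has $f_{XX}(p)=f_{XY}(p)=0$, whence $f_X=a(X)\,\Delta X^{2}+b\,\Delta Y$ and $f_Y=c(X)\,\Delta X^{2}+d\,\Delta Y$. Eliminating $\Delta Y$ yields $q\,\Delta X^{2}\in\langle f_x,f_y\rangle$ with $q:=ad-bc$ a \emph{separation polynomial}: $2q(p)=f_{XXX}(p)f_{YY}(p)\neq 0$ by the ordinary--cusp condition, and Taylor--Lagrange gives $2q(B)\subset \Box f_{XXX}\,\Box f_{YY}-\Box f_{XXY}\,\Box f_{XY}$. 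Rewriting $f_{XXX},f_{XXY},f_{XY},f_{YY}$ in the original derivatives, using the cusp identity $f_{xy}(p)^{2}=f_{xx}(p)f_{yy}(p)$, and then replacing the unknown point values $f_{xx}(p),f_{xy}(p),f_{yy}(p)$ by their box enclosures is precisely what produces $J,K,L,M$ and $I=J(JK-LM)$. Thus $0\notin I$ guarantees $q\neq 0$ on $B$, so every zero of $f_x=f_y=0$ in $B$ shares $p$'s $X$-coordinate; combined with $0\notin J=\Box f_{YY}$ this forces uniqueness. Your $g''$ and the paper's $q$ agree (up to the factor $f_{yy}(p)^{4}$) at $p$ but are different functions away from $p$; the intervals in the statement are tailored to bound $2q$, not your $g''$.
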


\begin{Remark}
  If $B$ is small enough, then either $I$ or $I'$ does not contain zero.
\end{Remark}

When a solution of a system $S$ is singular, there are several
ways to check that a box $B$ does not contain any other solutions of $S$.
One way is to compute a univariate polynomial $r$ vanishing on the
projection of the solutions of $S$ (with resultant or Gröbner bases),
and check that the projection of $B$ contains only one solution of the
square-free part of $r$. Another way is to use a multivariate version of
the Rouch\'e theorem (\cite{Verschelde1994} for example). In our case,
this would amount to solve a system of two polynomials of degree lower
than 3 and check if these solutions are within a suitable complex box
containing $B$.

The method we propose is easy to implement and can potentially be
extended to other kinds of functions than polynomials.
The main idea behind the proof of Lemma~\ref{lem:iso-cusp} is to compute a
pseudo-resultant of $f_x$ and $f_y$ in the ring localized at $p$. Then
using the fact that the evaluation on a box of the coefficients of the
Taylor expansion of a polynomial $f$ is included in the evaluation of
the corresponding derivative of $f$, we can compute the evaluation of
the local elimination polynomial on $B$ using only derivatives of the
polynomials $f_x$ and $f_y$.

Before proving Lemma~\ref{lem:iso-cusp}, we define the notion of separation
polynomial that we will use.

%\paragraph{Separation polynomial}\label{separation-polynomial}

\begin{Definition}
  Let $S$ be a bivariate polynomial system vanishing on
  $p=(\alpha,\beta)$, and $I_S$ the ideal generated by its polynomials.
  Let $k$ be an integer and $q$ be a polynomial such that
  $q(x,y)(x-\alpha)^k \in I_S$ and $q(p)\neq 0$. Then we say that $q$ is a
  \emph{separation polynomial}.
\end{Definition}

A classical separation polynomial is obtained by computing the resultant
of $f$ and $g$ seen as univariate polynomials in $y$ with coefficients
in $K[x]$. We get a polynomial $r(x)$ that can be factorized in
$q(x)(x-\alpha)^k$ where $q(\alpha)\neq 0$. However we do not restrict $q$
to be a univariate polynomial.

\begin{Lemma}
  Let $q$ be a separation polynomial and $B$ be a box containing a solution
  $p=(\alpha,\beta)$ of $S$. If $0\notin\Box q$, then, the solutions of $S$ in
  $B$ all have the same $x$-coordinate.  Moreover, if there is a polynomial $r$
  in $I_S$ such that $0\notin\Box r_y$, then $S$ has only one solution in $B$.
\end{Lemma}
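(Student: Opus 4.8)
The plan is to use nothing more than the defining algebraic property of a separation polynomial together with the fact that a convergent interval extension encloses the range: $\{q(x,y)\mid(x,y)\in B\}\subset\Box q(B)$.

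\emph{Step 1: every solution of $S$ in $B$ has $x$-coordinate $\alpha$.} Let $(\alpha',\beta')\in B$ be an arbitrary solution of $S$. By definition of a separation polynomial there is an integer $k$ with $q(x,y)(x-\alpha)^k\in I_S$; hence this polynomial vanishes at $(\alpha',\beta')$, which gives $q(\alpha',\beta')(\alpha'-\alpha)^k=0$. Since $(\alpha',\beta')\in B$ and $0\notin\Box q(B)$, the value $q(\alpha',\beta')$ is nonzero, so $(\alpha'-\alpha)^k=0$, i.e. $\alpha'=\alpha$. This proves the first assertion.

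\emph{Step 2: uniqueness.} Write $B=B_x\times B_y$; since $p\in B$ we have $\alpha\in B_x$, so the segment $\{\alpha\}\times B_y$ is contained in $B$. Consider the univariate polynomial $\varphi(y):=r(\alpha,y)$ on $B_y$. Its derivative is $\varphi'(y)=r_y(\alpha,y)$, and for $y\in B_y$ the value $r_y(\alpha,y)$ lies in $\Box r_y(B)$, which by hypothesis does not contain $0$. Hence $\varphi$ is strictly monotone on $B_y$ and has at most one zero there. Since $r\in I_S$, every solution of $S$ is a zero of $r$; by Step 1 every solution of $S$ in $B$ is of the form $(\alpha,y)$ with $y\in B_y$, hence is a zero of $\varphi$. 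Therefore $S$ has at most one solution in $B$, and since $p$ is one such solution, it is the only one.

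\emph{On the difficulty.} There is essentially no obstacle in this lemma itself: once the notion of separation polynomial is in place, the first part is a two-line consequence of "$0\notin\Box q$ forbids $q$ from vanishing on $B$" and the second part is the elementary monotonicity argument for a one-variable restriction. The real work is upstream — exhibiting an explicit separation polynomial $q$ (and a suitable $r$) whose interval evaluation can be bounded away from $0$ on a small box, which is exactly what the proof of Lemma~\ref{lem:iso-cusp} carries out by computing a pseudo-resultant of $f_x$ and $f_y$ in the local ring at $p$ and reading its coefficients off the Taylor expansion.
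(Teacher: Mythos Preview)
Your proof is correct and follows essentially the same approach as the paper's: the first part uses that $q(x,y)(x-\alpha)^k\in I_S$ together with $0\notin\Box q$ to force $\alpha'=\alpha$, and the second part is the one-variable Rolle/monotonicity argument on $r(\alpha,\cdot)$. Your write-up is in fact slightly more explicit than the paper's (you spell out the monotonicity where the paper just invokes the existence of a zero of $r_y$), and your closing remark about the real work being the explicit construction of $q$ in the proof of Lemma~\ref{lem:iso-cusp} is accurate.
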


\begin{proof}
  Let $(x_0,y_0)\in B$ such that $x_0\neq \alpha$. If $q(x_0,y_0)\neq 0$, then
  $q(x_0,y_0)(x_0-\alpha)^k \neq 0$. Thus there is a polynomial in $I_S$
  that does not vanish on $(x_0,y_0)$ and this point is not a solution of
  $S$. Moreover, if $(\alpha,y_0)$ is solution of $S$ with
  $y_0\neq \beta$, then $r(\alpha,\beta)=r(\alpha,y_0)=0$ and $r_y$ has a
  solution in $B$ which contradicts the second part of the lemma.
\end{proof}

\paragraph{Proof of Lemma \ref{lem:cuspnoloop}}\label{proof-of-lemma}

Consider the system $f_x=f_y=0$. Any closed loop of $f$ contains a
solution of this system. The cusp point $p$ is also solution of this
system and if $B$ contains no other solution than $p$, then $B$ cannot
contain a loop. By hypothesis, $p$ is a cusp, hence a singular solution of
the system $f_x=f_y=0$. Thus the determinant of the
Hessian vanishes and we have: $f_{xy}(p) = f_{x^2}(p)f_{y^2}(p)$. And
since $p$ is an ordinary cusp, we know that either $f_{x^2}(p)$ or
$f_{y^2}(p)$ is not zero (otherwise the multiplicity would be $4$ or more in one
 direction). Assume without restriction of generality that
$f_{y^2}(p)\neq 0$. And let $X,Y$ be two new variables such that
$\twovector{x}{y} = M\cdot\twovector{X}{Y}$ where:
\[M = \twomatrix{f_{yy}(p)}{0}{-f_{xy}(p)}{1} \twovector{X}{Y}\]

Differentiating $f$ along the new variables, we have:
\[\twomatrix{f_{XX}}{f_{XY}}{f_{XY}}{f_{YY}} = M^T\twomatrix{f_{xx}}{f_{xy}}{f_{xy}}{f_{yy}}M\]

In particular, we have:

% \begin{nolinenumbers}
\begin{dgroup*}
\begin{dmath*}
f_{XY} = f_{yy}(p)f_{xy}-f_{xy}(p)f_{yy}
\end{dmath*}
\begin{dmath*}
f_{YY} = f_{yy}
\end{dmath*}
\begin{dmath*}
f_{XX} = f_{yy}(p)^2f_{x^2}-2f_{y^2}(p)f_{xy}(p)f_{xy}+f_{xy}(p)^2f_{yy}
       = f_{yy}(p)(f_{yy}(p)f_{xx} + f_{xx}(p)f_{yy} - 2 f_{xy}(p)f_{xy})
\end{dmath*}
\begin{dmath*}
f_{XXX}= f_{yy}(p)^3f_{xxx}-3f_{yy}(p)^2f_{xy}(p)f_{xxy}+3f_{yy}(p)f_{xy}(p)^2f_{xyy}
         -f_{xy}(p)^3f_{yyy}
       = f_{yy}(p)(f_{yy}(p)^2f_{xxx}-3f_{yy}(p)f_{xy}(p)f_{xxy}+3f_{xy}(p)^2f_{xyy}
         -f_{xy}(p)f_{xx}(p)f_{yyy})
\end{dmath*}
\end{dgroup*}
% \end{nolinenumbers}

 Observe that $f_{XY}(p) = 0$ and
$f_{XX}(p) = f_{yy}(p)(f_{xx}(p)f_{yy}(p)-f_{xy}(p)^2) = 0$. Thus, the
polynomial system $f_X,f_Y$ has the form: \[ \begin{array}{l}
f_X = a(X)\Delta X^2 + b(X,Y)\Delta Y\\
f_Y = c(X)\Delta X^2 + d(X,Y)\Delta Y\\
\end{array} \]

Eliminating $\Delta Y$, we get the polynomial $\Delta X^2(ad-c b)$ in
the ideal generated by $f_x$ and $f_y$. Letting $q = ad - cb$, we can
verify that $q(p)\neq0$. Indeed we have
$2q(p)=f_{XXX}(p)f_{YY}(p)-f_{XXY}(p)f_{XY}(p)=f_{XXX}(p)f_{YY}(p)$. By
assumption, $f_{YY}(p)=f_{yy}(p)\neq0$ and since $p$ is an ordinary
cusp, it cannot have a triple root in $X$ and $f_{XXX}(p)\neq 0$. Thus
$q$ is a separation polynomial.

Then, we can observe that $a(X) = \frac{f_X(X,\beta)}{\Delta X^2}$,
$c(X) = \frac{f_Y(X,\beta)}{\Delta X^2}$, and
$b(X,Y)=\frac{f_X-a\Delta X^2}{\Delta Y}$,
$d(X,Y) = \frac{f_Y-c\Delta X^2}{\Delta Y}$. Thus, using Taylor-Lagrange
theorem, we can deduce that if $B$ is a box containing $(\alpha,\beta)$:

\begin{align*}
    a(B) & \subset \frac{\Box f_{XXX}}{2} & c(B)&\subset \frac{\Box
    f_{XXY}}{2}\\
    b(B) & \subset \Box f_{XY} & d(B)&\subset \Box f_{YY}
\end{align*}

Finally, evaluating $2q$ on a box containing $p$, we get:

% \begin{nolinenumbers}
\begin{dmath*}
    2\Box q \subset  \Box f_{XXX}\Box f_{YY} - \Box f_{XXY}\Box f_{XY}\\
           \subset  f_{yy}(p)(f_{yy}(p)^2\Box f_{xxx}-3f_{yy}(p)f_{xy}(p)\Box
                               f_{xxy}+3f_{xy}(p)^2\Box f_{xyy} -f_{xy}(p)f_{xx}(p)\Box f_{yyy})
                              \Box f_{yy}\\
                    - f_{yy}(p)(f_{yy}(p)\Box f_{xxy} + f_{xx}(p)\Box f_{yyy} - 2 f_{xy}(p)\Box f_{xyy})
                       (f_{yy}(p)\Box f_{xy} - f_{xy}(p)\Box f_{yy})\\
           \subset  I(IJ-KL)
\end{dmath*}
% \end{nolinenumbers}

Thus if $0\notin I(IJ-KL)$ then, $0\notin \Box q$ and $0\notin \Box f_{YY}$,
thus $B$ contains no other solution of $f_x=f_y=0$ than $p$.

\subsection{Algorithm for the resultant and the discriminant curves}
\label{sec:loop}
Using the interval criteria of Lemmas \ref{lem:iso-node} and \ref{lem:iso-cusp}
for the detection of loops, Algorithm~\ref{algo:avoid-loop} returns a refined
box of a singular point that avoids closed loops of the curve, as soon as we
know in advance if the singularity is a node or an ordinary cusp. Note that this
algorithm always terminates if the singularity is a node or an ordinary cusp,
and works for any algebraic curve.

\begin{algorithm}%[t]
  \caption{Avoid curve loops in a singularity box}
\label{algo:avoid-loop}
\begin{algorithmic}[1]
  \Require{A box $B$ in $\R^2$ output by Algorithm~\ref{algo:nb-branches} or
    Algorithm~\ref{algo:nb-branches-discrim} containing a unique singular point
    $p$ with its type: node or cusp.}

  \Ensure{A box that avoids closed loops of the curve.}
 
  \State Let $f$ be the resultant and $s_{22},s_{21},s_{11},s_{10}$ be the
  subresultant coefficients of $P$ and $P_z$ wrt $z$.

  % Let $K_{(s_{11},s_{10})}$ and $K_{(f_x,f_y)}$ be the Krawczyk operators of the
  % mapping $(s_{11},s_{10})$ and $(f_x, f_y)$.

\While{true}

\If{ $B$-type = node \And  $K_{(f_x,f_y)}(B)\subset int(B)$ }
\Return $B$
\EndIf

\If{ $B$-type = cusp}
\State Compute $I$ and $I'$ as defined in Lemma~\ref{lem:iso-cusp}
\If {$0\not\in I$ \Or $0\not\in I'$ }
\Return $B$
\EndIf
\EndIf

\State $B:=B\cap K_{(s_{11},s_{10})}(B)$
\EndWhile
\end{algorithmic}
\end{algorithm}

% \section{Experiments}
% 
% \begin{itemize}
% \item Isotop for the resultant of degree 6 poly with bitsize 5: 180s
% \item Isotop for the resultant of degree 7 poly with bitsize 5: 1140s
% \item test homotopy: Hom4ps, bertini: failed; phcpack?
% \item with Alias: d6,t5: box-1..1, with 3 sol: $<$60s  sur satelite 
% \item with Alias: d7,t5: box-1..1, ?
% \item Guillaume subdivision seems longer
% \end{itemize}

\section{Experiments}
\label{sec:experiments}
% 
% \begin{itemize}
% \item Isotop for the resultant of degree 6 poly with bitsize 5: 180s
% \item Isotop for the resultant of degree 7 poly with bitsize 5: 1140s
% \item test homotopy: Hom4ps, bertini: failed; phcpack?
% \item with Alias: d6,t5: box-1..1, with 3 sol: $<$60s  sur satelite 
% \item with Alias: d7,t5: box-1..1, ?
% \item Guillaume subdivision seems longer
% \end{itemize} 

%\MP{
%  \begin{itemize}
%  \item demander à Fabrice de tester son dernier code de RS3 (dont le temps se
%  rapproche d'un calcul de resultant avec maple), avoir des binaires pour linux
%  64bit,
%\item  eliminer la colone Isolate, on peut eventuellement dire que c'est tjs
%  plus lent
%\item Pour l'homotopy: eliminer le temps par chemin, mais estimer le temps total
%  a partir de celui-ci pour Bertini en degree $>7$ en mettant une note pour
%  expliquer.
%\item Aller jusqu'au degree 9 doit etre possible
%  \end{itemize}
%}

As a proof of concept of the approach presented in this paper to compute the
topology of a singular plane curve defined by a resultant or a discriminant, we
have implemented steps (1) and (2) proposed in
Section~\ref{section-Introduction}.
Recall that step (1) consists in isolating the singularities of the curve. This
isolation is performed by Algorithm~\ref{algo:subdiv-sing} and we compare our
results with state-of-the-art symbolic and homotopic methods.
% The branch and bound algorithm~\ref{algo:subdiv-sing} handles the isolation in a
% bounded box. \cite{Nbook90} \MP{il faut une ref precise à la section} shows how
% to extend such a method in a global isolation in the plane by applying
% algorithm~\ref{algo:subdiv-sing} to three systems, each time within the initial
% box $[0,1]\times[0,1]$.
% Recall that step (1) consists in isolating singularities of the curve. It is
% partially addressed by algorithm \ref{algo:subdiv-sing} that is a branch and
% bound isolation singularities in a given box. \cite{Nbook90} \MP{il faut une ref
%   precise à la section} shows how to extend such a branch and bound method in a
% global isolation method by applying it to three systems, each time within the
% initial box $[0,1]\times[0,1]$.
In step (2), topology around singularities is computed. It is addressed in this
paper by Algorithms~\ref{algo:nb-branches} and \ref{algo:nb-branches-discrim},
that determine the number of branches at a singularity and its nature (node or
cusp), and algorithm \ref{algo:avoid-loop} that ensures that no loops lie in a
box containing a singularity.

% -----
% We implemented the above mentioned algorithms within the mathematical software
% \texttt{sage}, that provides MPFI \cite{RR04} implementation of interval
% arithmetic over a multi-precision floating arithmetic. We used the extension of
% the Krawczik operator at order two, and second order centered evaluation of
% polynomials, see \cite{Nbook90} for details.

All softwares were tested on a Intel(R) Xeon(R) CPU L5640 @ 2.27GHz machine with
Linux. Running times given here have to be understood as sequential times in seconds. 

Section~\ref{sec:soft} gives details on our implementation and the other
softwares used for comparison.  Section~\ref{subsection:isolation} presents
results of our approach for the isolation of singularities, and a comparison to
state-of-the-art symbolic and numeric methods.
Section~\ref{subsection:topology} reports our results for the computation of the
local topology at singularities.

\paragraph{Data for Tables~\ref{tab:growing_degree_constant_bitsize},
  \ref{tab:constant_degree_growing_bitsize} and \ref{tab:topology}.}

Random dense polynomials $P,Q$ are generated with given degree $d$ and bitsize
$\sigma$, that is the coefficients are integers chosen uniformly at random with
absolute values smaller than $2^\sigma$.
Unless explicitly stated, the given running times are averages over five
instances for each pair $(d,\sigma)$.

\subsection{Details of implementations}
\label{sec:soft}
\paragraph{Symbolic methods.} 
% We tested two symbolic solvers, the function \texttt{Isolate} of the
% \texttt{RootFinding} Maple package and \texttt{RSCube}, developed by Fabrice
% Rouillier available at
% \href{https://gforge.inria.fr/projects/rsdev/}{https://gforge.inria.fr/projects/rsdev/}.
% \texttt{Isolate} is a general solver based on Groebner basis and Rational
% Univariate Representations (RUR). \texttt{RSCube} is specialized for bivariate
% systems and uses triangular decompositions and RURs, it is shown in
% \cite{bouzidi:2011:inria-00580431:1,bouzidi:tel-00979707} that it is one of the
% best bivariate solvers.
% 
% The two first columns of Tables~\ref{tab:growing_degree_constant_bitsize} and
% \ref{tab:constant_degree_growing_bitsize} reports running times in seconds for
% these two methods in columns t for isolating the real solutions of the
% over-determined system of singularities $\{s_{11},s_{10},res\}$.

We tested \texttt{RS4}, developed by Fabrice Rouillier, that is specialized for bivariate
systems and uses triangular decompositions and Rational Univariate Representations(RUR); it is shown in
\cite{bouzidi:2011:inria-00580431:1,bouzidi:tel-00979707} that it is one of the
best bivariate solvers. 
Roughly speaking, it performs two steps: the first one, purely symbolic,  computes the RUR of the system. 
The second one is the numeric isolation of the solutions.
A more stable but less efficient version, called \texttt{RSCube}\footnote{available at
\href{https://gforge.inria.fr/projects/rsdev/}{https://gforge.inria.fr/projects/rsdev/}},
can be found as a package for the software \texttt{Maple}. 

The first column of Tables~\ref{tab:growing_degree_constant_bitsize} and
\ref{tab:constant_degree_growing_bitsize} reports running times in seconds for
\texttt{RS4} for isolating the real solutions of the
% over-determined system of singularities $\{s_{11},s_{10},res\}$.
system  $\{s_{11},s_{10}\}$. 
Recall that solutions of this system are singularities of the curve only if they
also are solutions of the resultant $res$.
% \RIc{Voir si la nouvelle version de RSCube accepte des contraintes}

We did also test the routine \texttt{Isolate} of the package \texttt{RootFinding} natively available within \texttt{Maple}.
Since it deals with over-determined systems, it has been used to isolate solutions of $\{s_{11},s_{10},res\}$. 
Obtained results are not reported in Tables \ref{tab:growing_degree_constant_bitsize} and
\ref{tab:constant_degree_growing_bitsize} because they are outperformed by \texttt{RS4} in every cases.

%but are briefly commented in subsection \ref{subsection:isolation}.

\paragraph{Homotopy methods.} 
We tested two homotopy solvers, \texttt{HOM4PS} \cite{HOM4PS2} and
\texttt{Bertini}\footnote{\href{https://bertini.nd.edu/}{https://bertini.nd.edu/}}. These
methods do not accept constraints, thus the isolation of the system
$\{s_{11},s_{10}\}$ is performed.
% and this system contains additional points that are not singularities of the curve. \RIc{Voir precedente remarque}
Note that the path tracking of these
software is not certified and solutions can be missed when the path tracker
jumps from one path to another. We measure the reliability of a resolution by
comparing the number of obtained complex solutions to the B\'ezout bound of the
system, which is the actual number of solutions since our systems are dense and
regular.  In Tables~\ref{tab:growing_degree_constant_bitsize} and
\ref{tab:constant_degree_growing_bitsize}, this measure is reported in the
column nsol/deg.
Notice that we tackled the problem of overflows that can arise when representing large integers by normalizing coefficients of input polynomials.   
% \MP{mention the normalization of the input coeff}

\paragraph{Subdivision method.}
We have implemented Algorithms~\ref{algo:subdiv-sing}, \ref{algo:nb-branches},
\ref{algo:nb-branches-discrim} and \ref{algo:avoid-loop} within the mathematical
software \texttt{sage}.  The critical sub-algorithms are the evaluation of
polynomials and the Krawczik operator.  Since the subresultant polynomials $s_{10}$
and $s_{11}$ have a large number of monomials with very large coefficients, an
important issue lies in both efficiency and sharpness of their interval
evaluation. We used the \texttt{fast_polynomial} library
\cite{moroz:hal-00846961} that allows to compile polynomial evaluations using
Horner scheme.  The double precision interval arithmetic of the \texttt{C++ boost}
library is used for Tables~\ref{tab:growing_degree_constant_bitsize} and
\ref{tab:constant_degree_growing_bitsize}. For Table~\ref{tab:topology}, we used
the quadruple precision interval arithmetic of MPFI \cite{RR04}.
We used the centered form at order two evaluation of polynomials that requires
to compute symbolically partial derivatives up to order two of
polynomials. Precisely, for a box $B$ with center $c$,
$\Box f(B)= f(c)+J_f(c)(B-c)+\frac{1}{2}H_f(B)(B-c)^2$ where $J_f$ is the
Jacobian and $H_f$ the Hessian of $f$. This evaluation form is studied in
\cite[\S 2.4]{Nbook90}
% in the univariate case and refer to \cite{1984Cornelius}
and proved to be quadratically convergent. It happened to be
more efficient in our experiments than the classical mean value form. 
% \RIc{parler de Krawczik a l'ordre 2}
% this is a bit surprising, a guess is that it may be better if f and f' are vanishing
In the Krawczik operator, derivatives of $s_{10}$ and $s_{11}$ are evaluated at order 1.

% In addition to the classical Krawczik operator, which is the mean value form of
% the Newton operator (Lemma~\ref{cri:regsol}), we use the order two evaluation
% form, see \cite{Nbook90} \MP{precise ref???} for details.
%
% $$K_2(x)=N(x_0)+N'(x_0)(x-x_0)+\frac{1}{2} N''(x)(x-x_0)^2$$
% Note that $N'(x_0)=1-f'(x_0)^{-1}f'(x_0)=0$ so that 
% $$K_2(x)=x_0-f'(x_0)^{-1}f(x_0)-\frac{f'(x_0)^{-1}}{2}f''(x)(x-x_0)^2$$
% Notice that this Krawczik operator at order two does not ensure uniqueness of
% solution in a box.  The classical Krawczik operator is thus checked on the boxes
% as a postprocessing step.

Algorithm~\ref{algo:subdiv-sing} performs the isolation in a bounded box.  To
extend the isolation to all real solutions, we use a method introduced by
\cite[p. 210]{Nbook90} (see also \cite[\S 5.10]{stahl1995} for a two dimensional
example). By changes of variables, this method transforms the isolation problem
in $\R^2$ to three isolations in the bounded box $[-1,1]\times[-1,1]$.  The
running times of Algorithm~\ref{algo:subdiv-sing} are given for the input box
$[-1,1]\times[-1,1]$ and for the global isolation in $\R^2$.
Concerning the isolation in $[-1,1]\times[-1,1]$, the column diam of
Tables~\ref{tab:growing_degree_constant_bitsize} and
\ref{tab:constant_degree_growing_bitsize} gives the minimum value of
$\log_{10}(diam(B))$ for all boxes $B$ either discarded or inserted in the list
of results $L_{sing}$ in Algorithm~\ref{algo:subdiv-sing}, and $diam(B)$ stands
for the diameter of $B$.

\subsection{Singularities isolation:
  Tables~\ref{tab:growing_degree_constant_bitsize} and
  \ref{tab:constant_degree_growing_bitsize}}
\label{subsection:isolation}

We analyze the results obtained with different approaches to isolate
singularities of a plane curve defined by $Resultant_z(P,Q)$ $=0$.
%\MP{est-ce qu'il y a une difference quand on prend un discriminant?} pas trop,
%se voit dans la taille de la boite
%
Table~\ref{tab:growing_degree_constant_bitsize} reports results for a constant
bitsize $\sigma=8$ and a variable degree $d$ while in
Table~\ref{tab:constant_degree_growing_bitsize} the degree is a constant $d=4$
and the bitsize $\sigma$ is the variable. 

\begin{itemize}
\item For all methods, the running times increases significantly with the degree
  of the input polynomials. 
\item Only the symbolic method has a significant increase of running time with
  the bitsize of the input polynomials.
%\item \texttt{Isolate} systematically fails when $d\geq6$ (in these cases the input 
%system has a total degree $\geq650$). When $(d,\sigma)=(5,8)$ we obtain t=23.4 seconds.
\item \texttt{HOM4PS} performs computation in double precision. Notice
  that it fails to parse input polynomials with large numbers of monomials. For
  instance, for $P, Q$ of degree 8, the subresultant polynomial $s_{10}$ has
  1326 monomials. In addition, as reported by the column nsol/deg,
  \texttt{HOM4PS} fails to find all solutions. 

 \texttt{Bertini} allows to use adaptive multi-precision and this has two
  consequences. First, \texttt{Bertini} was almost always able to isolate all
  solutions, thus we did not add the column nsol/deg as for \texttt{HOM4PS}. It
  only failed once in our experiments for a pair of input polynomials of degree 7 with
  bitsize 8, where the maximum precision of 1024 bits has been reached. Note
  also that for a degree larger than 7, we only computed a subset of the
  solutions so we cannot report on this reliability measure. Second, the
  multi-precision arithmetic has a heavy cost. 

  \texttt{Bertini} is thus more reliable but also slower than \texttt{HOM4PS}.

\item The isolation by subdivision in $\R^2$ is roughly three times more
  expensive than in the bounded box $[-1,1]\times[-1,1]$.  This is consistent
  with the fact that the isolation in $\R^2$ involves three isolations of systems
  of roughly the same complexity on this bounded box.
  
\item With constant values of $(d,\sigma)$, running times of the subdivision approach have a high variance. For instance, when $(d,\sigma)=(5,4)$ running times for the isolation in $\R^2$ are, for the five instances, $(229, 4.56, 3.03, 1.67, 2.08)$. 
\item Our approach is certified and more efficient than both homotopic and symbolic tested methods when $d>6$ for all the tests we did perform.

% \MP{  Efficiency wrt symbolic method??}
\end{itemize}

\subsection{Topology around singularities} 
\label{subsection:topology}

We focus here on the computation of the topology around singularities of
resultant and discriminant curves by applying successively
Algorithms~\ref{algo:nb-branches} or \ref{algo:nb-branches-discrim}, and
\ref{algo:avoid-loop}.
% As previously polynomial $P$ used to define discriminant curves, and $P$, $Q$
% for resultant curves, are characterized by their degree $d$ and the bit-size
% $\sigma$ of their coefficients.  For each pair $(d,\sigma)$, we have generated 5
% examples for each type of curve. 

Table~\ref{tab:topology} reports the results for different degrees $d$ and
constant bitsize $\sigma=8$ input polynomials. Algorithms~\ref{algo:nb-branches}
or \ref{algo:nb-branches-discrim}, and \ref{algo:avoid-loop} are applied on all
boxes containing singularities given by our global subdivision method.  Table
\ref{tab:topology} gives, for each type of curve and each pair $(d,\sigma)$ the
minimum, median and maximum of values $\log_{10}(diam(B))$ where $B$ are the
output boxes for which the topology is computed and certified.  The large range
of sizes for local topology certified boxes is due to the diversity of the
local geometry of the curve around a singular point: a singular point may be
near to another or near to a branch of the curve not connected to it locally. The
sizes are smaller for certifying singularities of a discriminant curve since the
test involves higher degrees polynomials to be evaluated.

\bigskip
% We finally propose to appreciate the quality of the test to avoid loops near
% cusp points of discriminant curves presented in lemma
% \ref{lem:iso-cusp}. 

We finally propose to appreciate the quality of different tests presented in
this paper on an example with a cusp and a nearby loop.  Consider the polynomial
$P_{\text{cusp}}$ defined as follows
% $$P_{\text{cusp}} = (z^3+zy-x)((y-2\delta)^2+(z-2\delta)^2+x^2-(\delta)^2)$$
$$P_{\text{cusp}} = (z^3+zx-y)((x-\delta')^2+(z-1)^2+y^2)-(\delta'/3)^2$$
Its discriminant curve with respect to $z$ is schematically drawn in the left part
of Figure~\ref{fig:loopdetection}.  This curve has a cusp point near $(0,0)$ and
a loop
% (the discriminant of the sphere $(y-2\delta)^2+(z-2\delta)^2+x^2-(\delta)^2=0$
% centered in $(2\delta,0)$ of radius $\delta$)
at a distance $\delta\simeq\delta'$ of this cusp point. The radius of the loop
is approximately $\delta$.  While the value of $\delta'$ decreases, we compute
\begin{itemize}
  % \item the time $t$ spent to compute the topology of the curve in the initial
  %   box $[-1,1]\times[-1,1]$,
\item the largest diameter $\tau_K$ of a box $B$ centered at the cusp point such
  that $K_{(s_{11},s_{10})}(B)\subset B$,
\item the largest diameter $\tau_C$ of a box $B$ centered at the cusp point such
  that Algorithm~\ref{algo:nb-branches-discrim} detects that the singularity in
  $B$ is a cusp,
\item the largest diameter $\tau_L$ of a box $B$ centered at the cusp point such
  that the test of Lemma~\ref{lem:iso-cusp} is satisfied.
\end{itemize}

% the largest diameter $\tau_L$ (respectively $\tau_K$) 
% of a box $B$ centered in the cusp point such that the test of lemma \ref{lem:iso-cusp} is satisfied (resp. $K_{(s_{11},s_{10})}(B)\subset B$).  
The right part of figure \ref{fig:loopdetection} displays the values of
$\log_{10}(\frac{\tau_K}{\delta})$, $\log_{10}(\frac{\tau_C}{\delta})$,
$\log_{10}(\frac{\tau_L}{\delta})$ when $\log_{10}(\delta)$ varies in
$[-0.5,-6]$.  For instance, when $\delta'=2^{-16}\simeq1.5*10^{-5}$, we obtain
$\delta\simeq10^{-5}$, $\tau_L \simeq 3.9*10^{-9}$, $\tau_K\simeq3*10^{-11}$ and
$\tau_C\simeq1.7*10^{-21}$. In this very precise case, the isolation of the
singularities in the initial box $[-1,1]\times[-1,1]$ together with the
computation of the local topology with our certified numerical method takes
$2.94$ seconds.

Notice that once a singularity has been isolated in a box $B$ by the subdivision
process, the box $B'$ allowing to certify the nature of the singularity is
obtained by contracting $B$ with the Krawczik operator, which is known to be
quadratically convergent. In the above example, when $\delta'=2^{-16}$, three
iterations of the Krawczik operator are needed to obtain the suitable box.  As a
consequence, rather than having an incidence on the computation time, the high
gradient of $\tau_C$ with respect to $\delta$ leads to the need of a
multi-precision arithmetic to carry out the topology certification.

Finally one can remark that in this example the test to avoid loops presented in
Lemma~\ref{lem:iso-cusp} do not require to contract the box obtained by the
subdivision process to be fulfilled.

%%%%%%%%%%%%%%%%%%% TABLES%%%%%%%%%%%%%%%%%%
\begin{table}[p]
  \caption{Isolating singularities of $Resultant_Z(P,Q)=0$, with $P$ and $Q$ of
degree $d$ and coefficients of constant bitsize $\sigma=8$. The running times
are in seconds, the value diam is the minimum value of $\log_{10}(diam(B))$ for
all boxes $B$ considered in Algorithm~\ref{algo:subdiv-sing}, where $diam(B)$ is
the diameter of the box.}
\begin{center}
\begin{small}
%%%%%%%%%%%%%%%%%%%%%%%%%%%%%%%%%%%55 même tableau sans la colonne nsol/deg pour Bertini%%%%%%%%%%%%%%%%%%%%%%%%%%%%%%%%%%%%%%%%%%%%%
% \begin{tabular}{r|c|c|ccc|cc|cc|c|}
%               & \texttt{RSCube} & \texttt{Isolate} &\multicolumn{3}{|c|}{\texttt{HOM4PS}} &\multicolumn{2}{|c|}{\texttt{Bertini}}&\multicolumn{3}{|c|}{\texttt{Subdivision}}\\
%   domain      &$\mathbb{R}^2$&$\mathbb{R}^2$&\multicolumn{3}{|c|}{$\mathbb{C}^2$}&\multicolumn{2}{|c|}{$\mathbb{C}^2$}&\multicolumn{2}{|c|}{$[-1,1]\times[-1,1]$}&$\mathbb{R}^2$\\
%    $d,\sigma$ &   t   &    t &     t &   t/p  &nsol/deg  &   t       &   t/p   & t     & diam  & t \\\hline  
% $4,8$         & 0.89  & 1.19 &0.09   & 0.001  & 98.8\%   & 4.9       & 0.02    & 0.3   & -3.6  & 0.73 \\
% $5,8$         & 10.45 & 23.4 & 1.42  & 0.005  & 97.4\%   & 159.07    &  0.07   & 0.7   & -3.2  &  2.48 \\
% $6,8$         & 90    & (1)  & 17.3  & 0.03   & 87.5\%   & 13064 (3) & 15.5 (3)& 2.1   & -3.5  & 11.6  \\
% $7,8$         & 657   & (4)  &101    & 0.07   & 80.4\%   & 83120 (3) & 47  (3) & 8.1   & -3.8  & 23.2  \\
% $8,8$         & 4227  & (4)  & (2)   & (2)    & (2)      & (4)       & 156 (3) & 73.3  & -5.2  & 247.9 
% \end{tabular}
%%%%%%%%%%%%%%%%%%%%%%%%%%%%%%%%%%%p%%%%%%%%%%%%%%%%%%%%%%%%%%%%%%%%%%%%%%%%%%%%%
\begin{tabular}{r|c|cc|c|cc|c|}
              & \texttt{RS4} &\multicolumn{2}{|c|}{\texttt{HOM4PS}} &\texttt{Bertini} &\multicolumn{3}{|c|}{\texttt{Subdivision}}\\
  domain      &$\mathbb{R}^2$&\multicolumn{2}{|c|}{$\mathbb{C}^2$}&$\mathbb{C}^2$&\multicolumn{2}{|c|}{$[-1,1]\times[-1,1]$}&$\mathbb{R}^2$\\
   $d,\sigma$ &   t      &     t &nsol/deg  &   t       & t     & diam  & t \\\hline  
$4,8$         & 0.214    & 0.078  & 98.6\% & 3.256           & 0.435  & -3.2   & 1.071 \\ 
$5,8$         & 2.845    & 1.543  & 96.3\% & 124.774         & 0.682  & -3.0   & 2.678 \\ 
$6,8$         & 23.90    & 15.18  & 90.3\% & 1604 (2)        & 3.067  & -3.8   & 9.630 \\ 
$7,8$         & 137.9    & 97.95  & 75.5\% & 83120 (2)       & 8.469  & -4.4   & 27.43 \\ 
$8,8$         & 725.7    & (1)    & (1)    & 382200 (2,3)    & 43.47  & -5.0   & 82.98 \\ 
$9,8$         & 2720 (2) & (1)    & (1)    & 2766400 (2,3)   & 47.25  & -4.8   & 273.2 \\
\end{tabular}
\end{small}
\end{center}
% (1) Fails with segmentation false (groebner basis?)\MP{report on the Isolate
%   function of maple in the comments}

(1) Fails with segmentation false (does not support polynomials with large number of terms)

(2) Has been run on a unique example

(3) Time has been obtained by interpolating the time spent for tracking a unique path
\label{tab:growing_degree_constant_bitsize}
\end{table}

\begin{table}[p]
  \caption{Isolating singularities of $Resultant_Z(P,Q)=0$, with $P$ and $Q$ of
    constant degree $d=5$ and coefficients of bitsize $\sigma$. The running times
    are in seconds, the value diam is the minimum value of $\log_{10}(diam(B))$ for all
    boxes $B$ considered in Algorithm~\ref{algo:subdiv-sing}, where $diam(B)$ is
    the diameter of the box. 
  } 
\begin{center}
\begin{small}
\begin{tabular}{r|c|cc|c|cc|c|}
              & \texttt{RS4} &\multicolumn{2}{|c|}{\texttt{HOM4PS}} &\texttt{Bertini} &\multicolumn{3}{|c|}{\texttt{Subdivision}}\\
  domain      &$\mathbb{R}^2$&\multicolumn{2}{|c|}{$\mathbb{C }^2$}&$\mathbb{C}^2$&\multicolumn{2}{|c|}{$[-1,1]\times[-1,1]$}&$\mathbb{R}^2$\\
   $d,\sigma$ &   t   &     t &nsol/deg  &   t       & t       & diam  & t \\\hline 
%   $4,4 $      & 0.4   & 0.06  & 98\%     & 2.5       & 0.61    & -3.8  & 1.46 \\
%   $4,8$       & 0.89  & 0.09  & 99\%     & 4.9       & 0.3     & -3.6  & 0.73 \\
%   $4,16$      & 1.6   & 0.12  & 98\%     & 2         & 0.25    & -2.8  & 0.98 \\
%   $4,32$      & 4     & 0.09  & 99\%     & 2.5       & 0.29    & -2.6  & 0.67 \\
%   $4,64$      & 10.6  & 0.04  & 99\%     & 1.8       & 0.2     & -2.4  & 0.65 \\
  $5,4 $      & 1.788 & 1.532 & 94.63\%  & 263.4     & 0.755   & -3.2  & 48.13 \\
  $5,8$       & 2.845 & 1.543 & 96.32\%  & 124.7     & 0.682   & -3.0  & 2.678 \\
  $5,16$      & 4.687 & 1.431 & 93.60\%  & 300.2     & 7.052   & -4.2  & 19.22 \\ 
  $5,32$      & 7.468 & 1.817 & 94.48\%  & 264.2     & 2.439   & -3.6  & 7.173 \\ 
  $5,64$      & 13.33 & 1.728 & 96.98\%  & 233.7     & 1.906   & -3.4  & 4.676 \\
\end{tabular}

\end{small}
\end{center}
\label{tab:constant_degree_growing_bitsize}
\end{table}

\begin{table}[p]
  \caption{Computing topology around singularities of discriminant
    (resp. resultant) curves when $P$ (resp. $P,Q$) has degree $d$ and constant
    bit-size $\sigma=8$. The values min, med and max are the minimum, median and
    maximum values of $\log_{10}(diam(B))$ where $B$ are the output boxes for which
    the topology is computed and certified by Algorithms~\ref{algo:nb-branches} or
    \ref{algo:nb-branches-discrim}, and \ref{algo:avoid-loop} and $diam(B)$ is
    the diameter of the box. }
 \begin{center}
\begin{small}
\begin{tabular}{r|c|c|c||c|c|c|}
              &\multicolumn{3}{|c||}{Resultant} &\multicolumn{3}{|c|}{Discriminant} \\
   $d,\sigma$ &   min         & med        & max        &  min         & med        & max       \\\hline 
$4,8$         & $-5$          & $-4$       & $-2$       &$-12$         & $-5$       & $-3$ \\
$5,8$         & $-6$          & $-4$       & $-2$       &$-9$          & $-4$       & $-3$ \\ 
$6,8$         & $-8$          & $-5$       & $-3$       &$-17$         & $-5$       & $-2$ \\
$7,8$         & $-9$          & $-4$       & $-3$       &$-15$         & $-6$       & $-3$ \\
$8,8$         & $-12$         & $-5$       & $-3$       &$-12$         & $-6$       & $-2$ \\ 
$9,8$         & $-9$          & $-5$       & $-3$       &$-15$         & $-6$       & $-3$ \\
% $4,8$         & -8            &      -3    &      -2    &        -7   &       -3    &   -3      \\
% $5,8$         & -8            &      -4    &      -3    &   -9        &       -6    &   -4      \\
% $6,8$         & -10           &      -4    &      -2    &   -13       &       -5    &   -3     \\
% $7,8$         & -7            &      -4    &      -3    &   -15       &       -6    &   -4     \\
% $8,8$         & -10           &      -5    &      -3    &   -15       &       -6    &   -4     \\
\end{tabular}
\end{small}
\end{center}
\label{tab:topology}
\end{table}

\begin{figure}[p]
 \begin{minipage}{0.4\linewidth}
 \centering
  \input{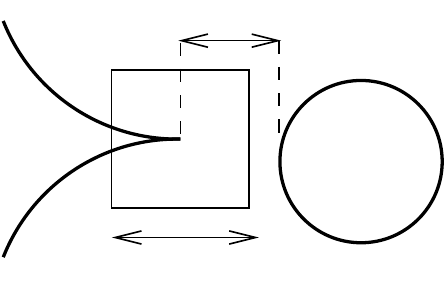_t}
 \end{minipage}
 \begin{minipage}{0.6\linewidth}
  \centering
  \includegraphics[width=9cm]{./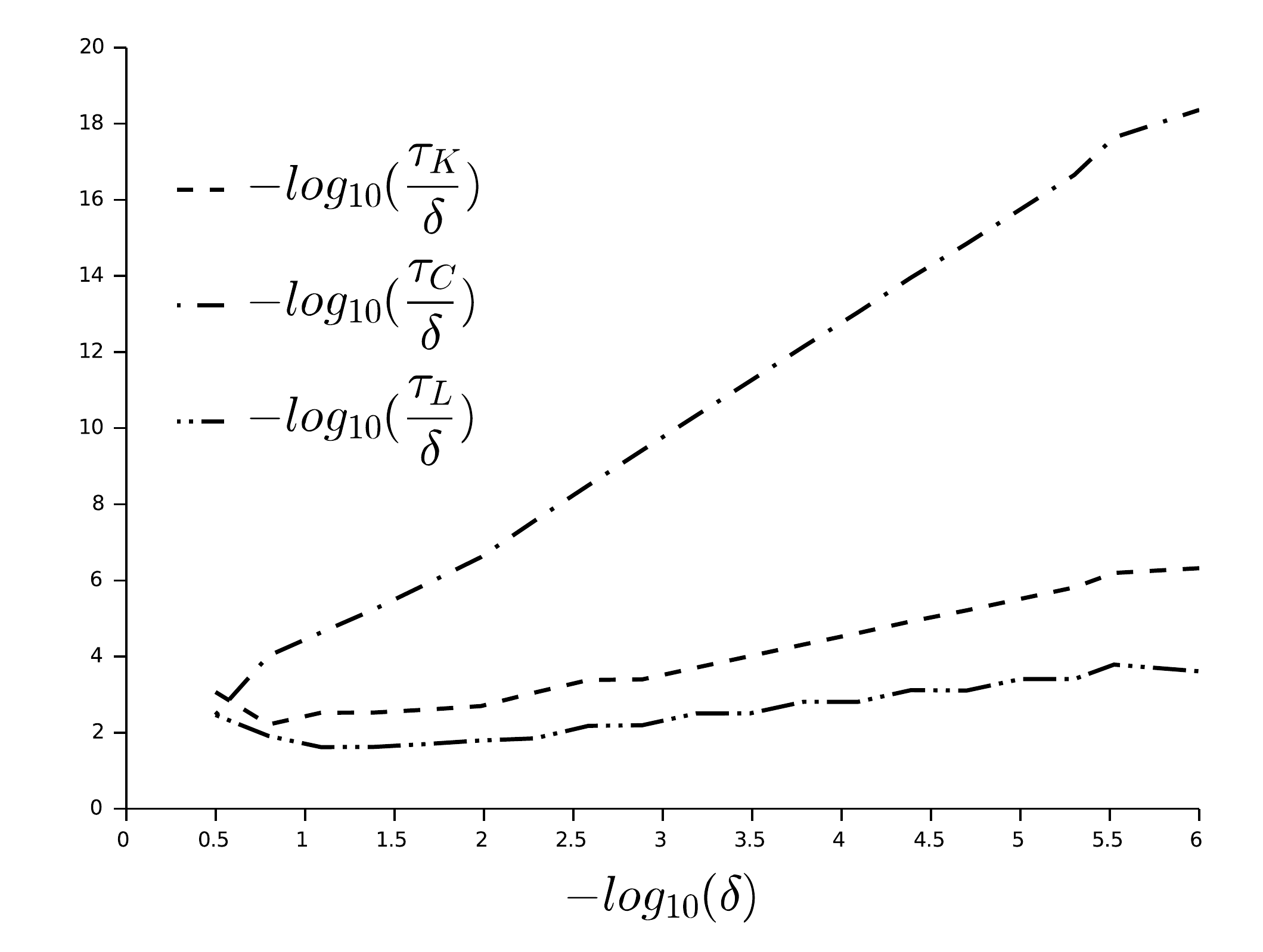}
 \end{minipage}
 \caption{Left: a schematic representation of the discriminant of the polynomial
   $P_{\text{cusp}}$. Right: largest diameters $\tau_K$, $\tau_C$, $\tau_L$ of a certified box as a function of
   the parameter $\delta$.}
 \label{fig:loopdetection}
\end{figure}

\section*{Acknowledgments}

The authors would like to thank Laurent Bus\'e and \'Eric Schost for fruitful
discussions.

\bibliographystyle{alpha}
\bibliography{bib-algcurves}

\end{document}